\def\oo{\overline}
\def\ma{\mathcal{A}}
\def\mc{\mathcal{C}}
\def\mf{\mathcal{F}}
\def\mp{\ma}
\def\ml{\mathcal{L}}
\def\mq{\ml}
\def\mh{\mathcal{H}}
\def\ar{\leftarrow}
\def\rar{\rightarrow}
\def\lrar{\leftrightarrow}
\def\beq{\begin{equation}}
\def\eeq#1{\label{#1}\end{equation}}
\def\ba{\begin{array}}
\def\ea{\end{array}}
\def\ii#1{\hbox{\it #1\/}}
\def\no{\ii{not}}
\def\FLP{\hbox{\it FLP\/}}
\def\FT{\hbox{\it FT\/}}
\def\val{\hbox{\it val\/}}
\newtheorem{lemma}{Lemma}
\title{Relating Two Dialects of Answer Set Programming}
\author[Harrison and Lifschitz]
{AMELIA HARRISON\\
Google\\
\email{amelia.j.harrison@gmail.com}\\
\and
VLADIMIR LIFSCHITZ\\
University of Texas at Austin\\
\email{vl@cs.utexas.edu}
}
\begin{document}
\maketitle

\begin{abstract}
The input language of the answer set solver {\sc clingo} is based
  on the definition of a stable model proposed by Paolo Ferraris.
  The semantics of the ASP-Core language, developed by
  the ASP Standardization Working Group, uses the approach to stable models
  due to Wolfgang Faber, Nicola Leone, and Gerald Pfeifer.
  The two languages are based on
  different versions of the stable model semantics, and the ASP-Core
  document requires, ``for the sake of an uncontroversial semantics,'' that
  programs avoid the use of recursion through aggregates.
  In this paper we prove that the absence of recursion through aggregates
  does indeed guarantee the equivalence between the two versions of the
  stable model semantics, and show how that requirement can be relaxed
  without violating the equivalence property.  The paper is under
  consideration for publication in {\sl Theory and Practice of Logic
  Programming}.
\end{abstract}

\section{Introduction}
The stable model semantics of logic programs
serves as the semantic basis of answer set programming (ASP).
The ASP-Core document\footnote{\tt
 https://www.mat.unical.it/aspcomp2013/ASPStandardization.},
produced in 2012--2015 by the ASP Standardization Working Group,
was intended as a specification for the
behavior of answer set programming systems. The existence of
such a specification enables system comparisons and competitions to
evaluate such systems.

The semantics of ASP programs described in
that document differs from that of the input language of the widely used
answer set solver {\sc clingo}.\footnote{\tt http://potassco.org/clingo.}
The two languages are based on different versions of the stable model
semantics: the former on the FLP-semantics, proposed by
\cite{fab04,fab11} and generalized to arbitrary propositional formulas by
\cite{tru10}, and the latter on the approach of \cite{fer05}.

In view of this discrepancy, the ASP-Core document includes a warning:
``For the sake of an uncontroversial semantics, we require [the use of]
aggregates to be non-recursive'' (Section~6.3 of Version 2.03c).
Including this warning was apparently motivated by the belief that in the
absence of recursion through aggregates the functionality of {\sc
clingo} conforms with the ASP-Core semantics.

In this paper, that belief is turned into a theorem: for a
programming language that is essentially a large subset of
ASP-Core,\footnote{This language does not include classical negation, weak
constraints, optimize statements, and queries, and it does not allow
multiple aggregate elements within the same aggregate atom.  On the other
hand, it includes the symbols {\tt inf} and {\tt sup}
from the {\sc clingo}
language.} we prove
that the absence of recursion through aggregates guarantees the equivalence
between ASP-Core and {\sc clingo}.  Our theorem is actually stronger, in
two ways.  First, it shows that the view of recursion through
aggregates adopted in the ASP-Core document is unnecessarily broad when
applied to disjunctive programs (see Footnote~\ref{ftb}).  Second,
it shows that aggregates that do not contain negation as failure
can be used recursively without violating that property.  For example, the rule
\beq
\ba l
\verb|val(W,0) :- gate(G,and), output(W,G),|\\
\verb|            #count{WW : val(WW,0), input(WW,G)} > 0|
\ea
\eeq{gate}
which describes the propagation of binary signals through an and-gate
\cite[Example~9]{gel14a}, has the same meaning in both languages.

A few years ago it was difficult not only to prove such a theorem, but
even to state it properly, because a mathematically precise semantics of the
language of {\sc clingo} became available only with the publication by
\cite{geb15}.  The concept of a stable model for {\sc clingo} programs is
defined in that paper in two steps: first a transformation~$\tau$ is
introduced,\footnote{An oversight in the definiton of~$\tau$ in that
publication is corrected in the arXiv version of the paper, {\tt
http://arXiv.org/abs/1507.06576v2}.}  which turns a {\sc clingo} program into a
set of infinitary propositional formulas, and then the definition of a stable
model due to \cite{fer05}, extended to the infinitary case by \cite{tru12},
is invoked.  We will refer to stable models in the sense
of this two-step definition as ``FT-stable.''

To see why infinite conjunctions and disjunctions may be needed for
representing aggregate expressions, consider an instance of rule~(\ref{gate}):
\beq
\ba l
\verb|val(w1,0) :- gate(g1,and), output(w1,g1),|\\
\verb|             #count{WW : val(WW,0), input(WW,g1)} > 0|.
\ea
\eeq{gate1}
The expression
in the second line of~(\ref{gate1}) corresponds, informally speaking,
to an infinite disjunction:
for at least one of infinitely many possible values \verb|ww| of the
variable {\tt WW}, the stable model includes both
\verb|val(ww,0)| and \verb|input(ww,g1)|.

The semantics of ASP-Core programs is precisely defined in Section~2 of the
ASP-Core document, but that definition is not completely satisfactory: it is
not applicable to rules with local variables.  The problem is that the
definition of a ground instance in Section~2.2 of the document includes
replacing
the list $e_1;\dots;e_n$ of aggregate elements in an aggregate atom by its
instantiation $\hbox{inst}(\{e_1;\dots;e_n\})$; the instantiation, as
defined in the document, is an infinite object,
because the set of symbols that can be substituted for local variables
includes arbitrary integers and arbitrarily long symbolic constants.  For
example, the instantiation of the aggregare element
$$\verb|{WW : val(WW,0), input(WW,g1)}|$$
in the sense of the
ASP-Core document is an infinite object, because the set of symbols that can
be substituted for \verb|WW| is infinite.  So the
result of the replacement is not an ASP-Core program.
Prior to addressing the main topic of this paper, we propose a way to
correct this defect.  We use a two-step procedure, similar to the one
employed by \cite{geb15}: after applying a transformation~$\tau_1$, almost
identical to~$\tau$,\footnote{The original translation~$\tau$ could be used
for this purpose as well. However, the
definition of~$\tau_1$ seems more natural.\label{f1}} it refers to a
straightforward generalization of the definition of a stable
model due to \cite{fab04,fab11} to the infinitary case.
In the absence of local variables, this semantics is consistent with the
ASP-Core document \cite[Chapter~12]{har17b}.  Stable models in the sense
of this two-step definition will be called ``FLP-stable.''

We start by defining the syntax of programs, two versions of the stable
model semantics of infinitary formulas, and two
versions of the semantics of programs.
The main theorem
asserts that if the aggregates used in a program recursively do not contain
negation then the FLP-stable models of the program are the same
as its FT-stable models.
To prove the theorem we investigate under what conditions the
models of a set of infinitary propositional formulas that are stable in
the sense of Faber et al.~are identical to the models
stable in the sense of Ferraris and Truszczynski.

Results of this paper have been presented at the 17th International
Workshop on Non-Monotonic Reasoning.

\section{Syntax of Programs}

The syntax of programs is described here in an abstract fashion, in the
spirit of \cite{geb15}, so as to avoid inessential details related to
the use of ASCII characters.

We assume that three pairwise disjoint sets of symbols are selected:
{\sl numerals}, {\sl symbolic constants},  and
{\sl variables}.  Further, we assume that these sets do not contain the
symbols
\beq
+\qquad -\qquad \times\qquad /
\eeq{ops}
\vskip -.5cm
\beq
\ii{inf} \qquad \ii{sup}
\eeq{exts}
\beq
= \qquad \not = \qquad <\qquad >\qquad \leq\qquad \geq
\eeq{comps}
\beq
\no\qquad\land\qquad \lor\qquad\ar
\eeq{conns}
\beq
,\qquad
:\qquad (\qquad )\qquad \{\qquad \}
\eeq{punct}
and are different from the {\sl aggregate names} \ii{count}, \ii{sum},
\ii{max}, \ii{min}.  All these symbols together form the
alphabet of programs, and rules will be defined as strings over this alphabet.

We assume that a 1--1 correspondence between the set of numerals and the
set~${\bf Z}$ of integers is chosen.  For every integer~$n$, the corresponding
numeral will be denoted by~$\oo n$.

{\sl Terms} are defined recursively, as follows:
\begin{itemize}
\item all numerals, symbolic constants, and variables, as well as symbols
\eqref{exts} are terms;
\item if~$f$ is a symbolic constant and~${\bf t}$ is a non-empty tuple
of terms (separated by commas) then $f({\bf t})$ is a term;
\item if $t_1$ and $t_2$ are  terms and $\star$ is one of the
symbols~(\ref{ops}) then $(t_1\star t_2)$ is a  term.
\end{itemize}

A term, or a tuple of terms,  is {\sl ground}
if it does not contain variables. A term, or a tuple of terms, is {\sl precomputed} if it contains neither variables nor symbols~(\ref{ops}).
We assume a total order on precomputed terms such that \ii{inf} is its least
element, \ii{sup} is its greatest element, and,
for any integers $m$ and $n$, $\oo m \leq \oo n$ iff $m \leq n$.

For each aggregate name we define a function that maps every set of non-empty
tuples of precomputed terms to a precomputed term.
Functions corresponding to each of the aggregate names are defined below using
the following terminology.
If the first member of a tuple~$\bf t$
of precomputed terms is a numeral $\oo n$ then we say that the integer~$n$ is
the {\sl weight} of~$\bf t$; if $\bf t$ is empty or its first member is not
an numeral then the weight of~$\bf t$ is~0.
For any set~$T$ of tuples of precomputed terms,
\begin{itemize}
\item $\widehat{\ii{count}}(T)$ is the numeral corresponding to the cardinality
of~$T$ if~$T$ is finite, and $\ii{sup}$ otherwise;
\item $\widehat{\ii{sum}}(T)$ is the numeral corresponding to the sum of the
weights of all tuples in~$T$
if~$T$ contains finitely many tuples with non-zero weights, and $\ii{sup}$
otherwise;
\item $\widehat{\ii{min}}(T)$ is~$\ii{sup}$ if~$T$ is empty, the
least element of the set consisting of the first elements of the tuples in $T$
if $T$ is a finite non-empty set, and $\ii{inf}$ if~$T$ is infinite;
\item
$\widehat{\ii{max}}(T)$ is~$\ii{inf}$ if~$T$ is empty, the greatest element
of the set consisting of the first elements of the tuples in $T$
if $T$ is a finite non-empty set, and $\ii{sup}$ if~$T$ is infinite.
\end{itemize}

An {\sl atom} is a string of the form $p({\bf t})$
where $p$ is a symbolic constant and ${\bf t}$ is a tuple of terms.
For any atom~$A$, the strings
\beq
A\qquad\no\ A
\eeq{ls}
are {\sl symbolic literals}.
An {\sl arithmetic literal} is
a string of the form $t_1\prec t_2$
where~$t_1$,~$t_2$ are terms and $\prec$ is one of the symbols (\ref{comps}).
A {\sl literal} is a symbolic or arithmetic literal.\footnote{In the
parlance of the ASP-Core document, atoms are ``classical atoms,''
arithmetic literals are ``built-in atoms,'' and literals are ``naf-literals.''}

An {\sl aggregate atom} is a string of the form
\beq
       \alpha\{{\bf t}:{\bf L}\}\prec s,
\eeq{eq:ag}
where
\begin{itemize}
\item
$\alpha$ is an aggregate name,
\item
${\bf t}$ is a tuple of terms,
\item
${\bf L}$ is a tuple of literals called the ``conditions'' (if
${\bf L}$ is empty then the preceding colon may
be dropped),
\item
$\prec$ is one of the symbols~(\ref{comps}),
\item
and $s$ is a term.
\end{itemize}
For any aggregate atom~$A$, the strings~(\ref{ls})
are {\sl aggregate literals}; the former is called {\sl positive}, and the
latter is called {\sl negative}.


A {\sl rule} is a string of the form
\beq
H_1\,\lor\,\cdots\,\lor\,H_k \ar B_1\,\land\,\cdots\,\land\,B_n
\eeq{eq:rule}
($k,n\geq 0$), where each~$H_i$ is an atom, and each~$B_j$ is a literal or
aggregate literal.
The expression $B_1 \land \dots \land B_n$ is the
{\sl body} of the rule, and  $H_1 \lor \dots \lor H_k$ is the {\sl head}.
A {\sl program} is a finite set of rules.

About a variable we say that it is {\sl global}
\begin{itemize}
\item
in a symbolic or arithmetic literal $L$, if it occurs in~$L$;
\item
in an aggregate atom~\eqref{eq:ag} or its negation, if it occurs in~$s$;
\item
in a rule~\eqref{eq:rule}, if it is global in at least one of the
expressions~$H_i$,~$B_j$.
\end{itemize}
A variable that occurs in an expression but is not global in it is
{\sl local}. 

For example, in rule~(\ref{gate}), which is written as
$$\ba l
\ii{val}(W,\oo 0) \ar \ii{gate}(G,\ii{and})\land\ii{output}(W,G)\\
            \qquad\qquad\qquad\land\,
\ii{count}\{\ii{WW} : \ii{val}(\ii{WW},\oo 0), input(\ii{WW},G)\} > \oo 0
\ea$$
in the syntax described above, the variables $W$ and $G$ are global,
and \ii{WW} is local.

A literal or a rule is {\sl closed} if it has no global variables.

\section{Stable Models of Infinitary Formulas}

\subsection{Formulas} 

Let $\sigma$ be a propositional signature,
that is, a set of propositional atoms.  The sets
\hbox{$\mathcal{F}_0,\mathcal{F}_1,\ldots$} of formulas
are defined as follows:
\begin{itemize}
\item $\mathcal{F}_0=\sigma$,
\item $\mathcal{F}_{i+1}$ is obtained from $\mathcal{F}_{i}$ by
adding expressions $\mathcal{H}^\land$ and $\mathcal{H}^\lor$ for all subsets
$\mathcal{H}$ of $\mathcal{F}_i$, and expressions \hbox{$F\rar G$}
for all $F,G\in\mathcal{F}_i$.
\end{itemize}
The elements of $\bigcup^{\infty}_{i=0}\mathcal{F}_i$ are called {\sl
(infinitary propositional) formulas} over $\sigma$.

In an infinitary formula, $F\land G$ and $F\lor G$ are abbreviations for
$\{F,G\}^\land$ and $\{F,G\}^\lor$ respectively; $\top$ and $\bot$
are abbreviations for~$\emptyset^{\land}$ and $\emptyset^{\lor}$;
$\neg F$ stands for $F\rar\bot$, and $F\lrar G$
stands for \hbox{$(F\rar G)\land(G\rar F)$}.
{\sl Literals} over~$\sigma$ are atoms from~$\sigma$
and their negations.  If $\langle F_\iota\rangle_{\iota\in I}$ is a
family of formulas from one of the sets $\mf_i$ then the expression
$\bigwedge_\iota F_\iota$ stands for the formula
$\{F_\iota\,:\,\iota\in I\}^\land$,
and $\bigvee_\iota F_\iota$ stands for $\{F_\iota\,:\,\iota\in I\}^\lor$.

Subsets of a propositional signature~$\sigma$ will be called its
{\sl interpretations}.
The satisfaction relation between an interpretation and a formula is
defined recursively as follows:
\begin{itemize}
\item For every atom $p$ from $\sigma$, $I\models p$ if $p\in I$.
\item $I\models\mathcal{H}^\land$ if for every formula $F$ in~$\mathcal{H}$,
$I\models F$.
\item $I\models\mathcal{H}^\lor$ if there is a formula $F$ in~$\mathcal{H}$
such that $I\models F$.
\item $I\models F\rar G$ if $I\not\models F$ or $I\models G$.
\end{itemize}
We say that an interpretation satisfies a set $\mathcal{H}$ of formulas,
or is a {\sl model} of $\mathcal{H}$, if it satisfies every formula in
$\mathcal{H}$.  We say that~$\mathcal{H}$ {\sl entails} a formula~$F$ if
every model of~$\mathcal{H}$ satisfies~$F$.  Two sets of formulas are
{\sl  equivalent} if they have the same models.

\subsection{FLP-Stable Models} 

Let~$\mathcal{H}$ be a set of infinitary formulas of the form $G\rar H$,
where~$H$ is a disjunction of atoms from~$\sigma$.  The
{\sl FLP-reduct} $\FLP(\mathcal{H},I)$ of~$\mathcal{H}$ w.r.t.~an
interpretation~$I$ of~$\sigma$ is the set of all formulas~$G\rar H$
from~$\sigma$ such that~$I$ satisfies~$G$.
We say that~$I$ is an {\sl FLP-stable model} of $\mathcal{H}$
if it is minimal w.r.t.~set inclusion among the models of
$\FLP(\mathcal{H},I)$.

It is clear that~$I$ satisfies~$\FLP(\mathcal{H},I)$ iff~$I$
satisfies~$\mathcal{H}$.  Consequently every FLP-stable model of~$\mathcal{H}$
is a model of~$\mathcal{H}$.

\subsection{FT-Stable Models}

The {\sl FT-reduct} $\FT(F, I$) of an infinitary formula~$F$ w.r.t.~an
interpretation~$I$ is defined as follows:
\begin{itemize}
\item For any atom $p$ from $\sigma$,
$\FT(p, I)=\bot$ if $I\not\models p$; otherwise $\FT(p,I)=p$.
\item $\FT(\mathcal{H}^\land, I)=\{\FT(G,I) \ |\ G\in\mathcal{H}\}^\land$.
\item $\FT(\mathcal{H}^\lor, I)=\{\FT(G,I) \ |\ G\in\mathcal{H}\}^\lor$.
\item $\FT(G\rar H, I)=\bot$ if $I\not\models G\rar H$;
otherwise
\hbox{$\FT(G\rar H, I)=\FT(G, I)\rar \FT(H,I)$}.
\end{itemize}
The FT-reduct $\FT(\mathcal{H},I)$ of a set~$\mathcal{H}$ of formulas is
defined as
the set of the reducts $\FT(F,I)$ of all formulas~$F$ from~$\mathcal{H}$.
An interpretation~$I$ is an {\sl FT-stable model} of $\mathcal{H}$
if it is minimal w.r.t.~set inclusion among the models of
$\FT(\mathcal{H},I)$.

It is easy to show by induction that~$I$ satisfies~$\FT(F,I)$ iff~$I$
satisfies~$F$.  Consequently every FT-stable model of a set of formulas
is a model of that set.

It is easy to check also that if~$I$ does not satisfy~$F$ then
$\FT(F,I)$ is  equivalent to~$\bot$.

\subsection{Comparison} 

An FLP-stable model of a set of formulas is not necessarily FT-stable, and
an FT-stable model is not necessarily FLP-stable.  For example, consider
(the singleton set containing) the formula
\beq
p \lor \neg p \rar p.
\eeq{eq:nodtight}
It has no FT-stable models, but the interpretation $\{p\}$ is its
FLP-stable model. On the other hand, the formula
\beq
\neg \neg p \rar p
\eeq{eq:nodtight2}
has two FT-stable models, $\emptyset$ and $\{p\}$, but latter is not
FLP-stable.

It is clear that replacing the antecedent of an implication by an
 equivalent formula within any set of formulas does not affect
its FLP-stable models.  For instance, from the perspective of the FLP
semantics, formula~(\ref{eq:nodtight}) has the same meaning as
$\top\rar p$, and~(\ref{eq:nodtight2}) has the same meaning as
$p\rar p$.  On the other hand,
the FLP-stable models may change if we break an implication of the form
$F\lor G \rar H$ into $F\rar H$ and $G\rar H$.  For instance,
breaking~(\ref{eq:nodtight}) into~$p\rar p$ and $\neg p\rar p$ gives a
set without FLP-stable models.

With the FT semantics, it is the other way around: it does matter, generally,
whether we write $\neg\neg p$ or~$p$ in the antecedent of an
implication, but breaking $F\lor G \rar H$ into two implications cannot
affect the set of stable models.

Transformations of infinitary formulas that do not affect their FT-stable
models were studied by \cite{har17}.  These authors extended, in
particular, the logic of here-and-there introduced by
\cite{hey30} to infinitary propositional formulas and showed that
any two sets of infinitary formulas that have the same models in the
infinitary logic of here-and-there have also the same FT-stable models.

\section{Semantics of Programs} 

In this section, we define two very similar translations,~$\tau_1$
and~$\tau$.  Each of them transforms any program into a set of infinitary
formulas over the signature~$\sigma_0$ consisting of all atoms of the
form $p({\bf t})$, where $p$ is a symbolic constant and~${\bf t}$
is a tuple of precomputed terms.  The definition of~$\tau$ follows
\cite{geb15}.

Given these translations, the two versions of the semantics of programs are
defined as follows.  The {\sl FLP-stable models} of a program~$\Pi$ are the
FLP-stable models of~$\tau_1\Pi$.  The {\sl FT-stable models} of~$\Pi$ are the
FT-stable models of~$\tau\Pi$.

\subsection{Semantics of Terms}

The semantics of terms tells us, for every ground term~$t$, whether it
is {\sl well-formed}, and if it is, which precomputed term is considered its
{\sl value}:\footnote{In the input language of
{\sc clingo}, a term may contain ``intervals'',
such as $1..3$, and in that more general setting a ground term may have
several values.}
\begin{itemize}
\item If $t$ is a numeral, symbolic constant, or one of the symbols
$\ii{inf}$ or $\ii{sup}$ then~$t$ is well-formed, and its value $\val(t)$ is
$t$ itself.
\item If $t$ is $f(t_1, \dots, t_n)$ and the terms $t_1, \dots, t_n$ are
well-formed, then $t$ is well-formed also, and $\val(t)$ is
$f(\val(t_1),\dots,\val(t_n))$.
\item If $t$ is $(t_1 + t_2)$ and the values of $t_1$ and $t_2$ are
numerals~$\oo{n_1}$,~$\oo{n_2}$ then~$t$ is well-formed, and $\val(t)$ is
$\oo{n_1+n_2}$; similarly when $t$ is $(t_1 - t_2)$ or $(t_1 \times t_2)$.
\item If $t$ is $(t_1 / t_2)$, the values of $t_1$ and $t_2$ are
numerals~$\oo{n_1}$,~$\oo{n_2}$, and $n_2\neq 0$ then~$t$ is well-formed, and
$\val(t)$ is $\oo{\lfloor n_1/n_2 \rfloor}$.
\end{itemize}
For example, the value of $\oo 7/\oo 3$ is $\oo 2$; the terms $\oo 7/\oo 0$
and $\oo 7/a$, where $a$ is a symbolic constant, are not well-formed.

If ${\bf t}$ is a tuple $t_1, \dots, t_n$ of well-formed ground terms
then we say that $\bf t$ is well-formed, and its value $\val({\bf t})$ is
the tuple $\val(t_1),\dots,\val(t_n)$.

A closed arithmetic literal $t_1\prec t_2$ is well-formed if~$t_1$ and~$t_2$
are well-formed.
A closed symbolic literal $p({\bf t})$ or $\no\ p({\bf t})$ is well-formed
if $\bf t$ is well-formed.  A closed aggregate literal~$E$ or $\no\ E$, where~$E$
is~(\ref{eq:ag}), is well-formed if $s$ is well-formed.

\subsection{Semantics of Arithmetic and Symbolic Literals} 

A well-formed arithmetic literal $t_1\prec t_2$
is {\sl true} if $\val(t_1)\prec\val(t_2)$, and {\sl false} otherwise.

The result of applying the transformation~$\tau_1$ to a well-formed
symbolic literal is defined as follows:
$$\tau_1(p({\bf t}))\hbox{ is }p(\val({\bf t}));
\quad\tau_1(\no\ p({\bf t}))\hbox{ is }\neg p(\val(\bf t)).$$

About a tuple of well-formed literals we say that it is {\sl nontrivial} if
all arithmetic literals in it are true.
If~$\bf L$ is a nontrivial tuple of well-formed arithmetic and symbolic
literals then $\tau_1{\bf L}$ stands for the
conjunction of the formulas~$\tau_1 L$ for all symbolic literals~$L$
in~${\bf L}$.

\subsection{Semantics of Aggregate Literals}

Let~$E$ be a well-formed aggregate atom~(\ref{eq:ag}), and let~${\bf x}$
be the list of variables occurring in~${\bf t}:{\bf L}$.  By~$A$
we denote the set of all tuples~${\bf r}$ of precomputed terms of the same
length as~${\bf x}$ such that
\begin{itemize}
\item[(i)]
${\bf t}^{\bf x}_{\bf r}$ is well-formed, and
\item[(ii)]
${\bf L}^{\bf x}_{\bf r}$ is well-formed and nontrivial.\footnote{Here
${\bf t}^{\bf x}_{\bf r}$ stands for the result of substituting~${\bf r}$
for~${\bf x}$ in $\bf t$.  The meaning of ${\bf L}^{\bf x}_{\bf r}$ is
similar.}
\end{itemize}
For any subset~$\Delta$
of $A$, by $\val(\Delta)$ we denote
the set of tuples ${\bf t}^{{\bf x}}_{\bf r}$ for all
${\bf r}\in\Delta$.
We say that $\Delta$ {\sl justifies}~$E$ if the relation $\prec$ holds
between $\widehat \alpha (\val(\Delta))$ and $\val(s)$.
We define $\tau_1 E$ to be the disjunction of formulas
\beq
\bigwedge_{{\bf r} \in \Delta} \tau_1({\bf L}^{\bf x}_{\bf r}) \land
\bigwedge_{{\bf r} \in  A\setminus\Delta}\neg\tau_1({\bf L}^{\bf x}_{\bf r})
\eeq{eq:altagtrans}
over the subsets $\Delta$ of $A$ that justify $E$.

Assume, for example, that~$E$ is
\beq\ii{count}\{X:p(X)\}=\oo 0.
\eeq{example}
Then

\begin{itemize}
\item $\bf t$ is $X$, $\bf L$ is $p(X)$, $\bf x$ is $X$,
and $A$ is the set of all precomputed terms,
$\val(\Delta)$ is $\Delta$;
\item $\widehat\alpha(\val(\Delta))$ is the cardinality of~$\Delta$ if $\Delta$ is
finite and $\ii{sup}$ otherwise;
\item $\Delta$ justifies~(\ref{example}) iff
$\Delta=\emptyset$.
\end{itemize}
It follows that $\tau_1 E$ is in this case the conjunction of the
formulas $\neg p(r)$ over all precomputed terms~$r$.
\medskip

The result of applying $\tau_1$ to a negative aggregate literal $\no\ E$
is $\neg \tau_1 E$.

The definition of $\tau_1{\bf L}$ given earlier
can be extended now to nontrivial tuples that may include well-formed
literals of all three kinds: for any
such tuple~$\bf L$, $\tau_1{\bf L}$ stands for the
conjunction of the formulas~$\tau_1 L$ for all symbolic
literals and aggregate literals~$L$ in~${\bf L}$.

\subsection{Applying $\tau_1$ to Rules and Programs} 

The result of applying $\tau_1$ to a rule~(\ref{eq:rule}) is defined as
the set of all formulas of the form
\beq
\tau_1((B_1,\dots,B_n)^{\bf x}_{\bf r}) \rar
\tau_1{(H_1)}^{\bf x}_{\bf r}\,\lor\,\cdots\,\lor\,
\tau_1{(H_k)}^{\bf x}_{\bf r}
\eeq{trr1}
where $\bf x$ is the list of all global variables of the rule, and $\bf r$
is any tuple of precomputed terms of the same length as~{\bf x}
such that $(B_1,\dots,B_n)^{\bf x}_{\bf r}$ is nontrivial and all literals
${(H_i)}^{\bf x}_{\bf r}$ are well-formed.

For example, the result of applying~$\tau_1$ to the rule
$$q(X/Y) \ar p(X,Y) \land X>Y$$
is the set of all formulas of the form
$$p(\oo m,\oo n) \rar q(\oo{\lfloor m/n \rfloor})$$
where $m$, $n$ are integers such that $m>n$ and $n\neq 0$.

For any program~$\Pi$, $\tau_1 \Pi$ stands for the union of the sets~$\tau_1 R$
for all rules~$R$ of~$\Pi$.

\subsection{Transformation~$\tau$}

The definition of~$\tau$ differs from the definition of~$\tau_1$
in only one place: in the treatment of aggregate atoms.
In the spirit of \cite{fer05},
we define $\tau E$ to be the conjunction of the implications
\beq
  \bigwedge_{{\bf r}\in\Delta}\tau({\bf L}^{{\bf x}}_{\bf r})
  \,\rar\,
  \bigvee_{{\bf r}\in A\setminus\Delta}\tau({\bf L}^{{\bf x}}_
  {\bf r})
\eeq{eq:agdef}
over the subsets $\Delta$ of $A$ that do not justify $E$.

For example, if~$E$ is~(\ref{example}) then $\tau E$ is
$$\bigwedge_{\Delta\subseteq A,\ \Delta\neq\emptyset}\left(
\bigwedge_{r\in\Delta}p(r) \rar \bigvee_{r\in A\setminus\Delta} p(r)
\right).$$

It is easy to show that $\tau E$ is  equivalent to $\tau_1 E$.
Consider the disjunction $D$ of formulas~\eqref{eq:altagtrans} over all
subsets $\Delta$ of $A$ that do not justify $E$. It is easy to see that every
interpretation satisfies either $\tau_1 E$ or $D$. On the other hand, no
interpretation satisfies both $D$ and $\tau_1 E$, because in every disjunctive
term of $\tau_1 E$ and every disjunctive term of $D$ there is a pair of
conflicting conjunctive terms. It follows that~$D$ is
 equivalent to $\neg \tau_1 E$. It is clear that $D$ is also
 equivalent to $\neg \tau E$.

Since all occurrences of translations~$\tau_1 E$ in implication~(\ref{trr1})
belong to its antecedent, it follows that $\tau$ could be used instead
of~$\tau_1$ in the definition of an FLP-stable model of a program.
For the definition of an FT-stable model of a program,
however, the difference between~$\tau_1$ and~$\tau$ is essential. Although
the translation $\tau_1$ will not be used in the statement or proof of the main
theorem, we introduce it here because it is simpler than $\tau$ in the sense
that in application to aggregate literals it does not produce implications. We
anticipate that for establishing other properties of FLP-stable models it
may be a useful tool.

\section{Main Theorem} 

To see that the FLP and FT semantics of programs are generally not
equivalent, consider the one-rule program
 \beq
p \ar \ii{count}\{\oo 1: not \; p\} < \oo 1.
 \eeq{eq:progd}
The result of applying $\tau$ to this program is
$\neg \neg p \rar p$.
The FT-stable models are $\emptyset$ and $\{p\}$; the first of them is an
FLP-stable model, and the second is not.

Our main theorem gives a condition ensuring that the FLP-stable models and
FT-stable models of a program are the same.
To state it, we need to describe the precise meaning of
``recursion through aggregates.''

The {\sl predicate symbol} of an atom $p(t_1, \dots, t_n)$ is the pair~$p/n$.
The {\sl predicate dependency graph} of a program $\Pi$ is the directed
graph that
\begin{itemize}
\item has the predicate symbols of atoms occurring in $\Pi$ as its vertices,
and
\item has and edge from $p/n$ to $q/m$ if there is a rule $R$ in $\Pi$
such that
$p/n$ is the predicate symbol of an
atom occurring in the head of $R$, and $q/m$ is the predicate symbol of an
atom occurring in the body of~$R$.\footnote{The definition of the
predicate dependency graph in the ASP-Core document includes also
edges between predicate symbols of atoms occurring in the head of the same
rule.  Dropping these edges from the graph makes the assertion of the main
theorem stronger.  By proving the stronger version of the theorem we
show that the understanding of recursion through aggregates in the ASP-Core
document is unnecessarily broad.\label{ftb}}
\end{itemize}

We say that an occurrence of an aggregate literal $L$ in a rule $R$ is
{\sl recursive} with respect to a program $\Pi$ containing~$R$ if
for some predicate symbol $p/n$ occurring in $L$ and some predicate
symbol~$q/m$ occurring in the head of~$R$ there exists a path from
$p/n$ to $q/m$ in the predicate dependency graph of $\Pi$.

For example, the predicate dependency graph
of program~\eqref{eq:progd} has a single vertex~$p/0$ and an edge from~$p/0$
to itself. The aggregate literal in the body of this program is recursive.
Consider, on the other hand, the one-rule program
$$
q \ar \no \; \ii{count}\{\oo 1:p\} < \oo 1.
$$
Its predicate dependency graph has the vertices $p/0$ and $q/0$, and an
edge from $q/0$ to $p/0$.  Since there is no path from $p/0$ to $q/0$ in
this graph, the aggregate literal in the body of this rule is not recursive.

We say that an aggregate literal is {\sl positive} if
it is an aggregate atom and all symbolic literals occurring in it
are positive.

\medskip\noindent
{\em Main Theorem}

\noindent
If every aggregate literal that is recursive with respect to a program~$\Pi$
is positive then the FLP-stable models of~$\Pi$ are the same as the
FT-stable models of~$\Pi$.

\medskip
In particular, if all aggregate literals in~$\Pi$ are positive then~$\Pi$ has
the same FLP- and FT-stable models.  For example, consider the one-rule program
$$
p \ar \ii{count}\{\oo 1:p\} > \oo 0.
$$
The only aggregate literal in this program is positive;
according to the main theorem, the program has the same FLP- and
 FT-stable models.  Indeed,
it is easy to verify that $\emptyset$ is the only FLP-stable model of
this program and also its only FT-stable model.

\section{Main Lemma} 

In this section we talk about infinitary formulas over an arbitrary
propositional signature~$\sigma$.

Formulas~$p$,~$\neg p$, $\neg\neg p$, where~$p$ is an atom from~$\sigma$,
will be called {\sl extended literals}.  A {\sl simple disjunction} is a
disjunction of extended literals.
A {\sl simple implication} is an implication \hbox{$\mp^\land \rar \mq^\lor$}
such that its antecedent~$\mp^\land$ is a conjunction of atoms and its
consequent~$\mq^\lor$ is a simple disjunction.
A conjunction of simple implications will be
called a {\sl simple formula}.  Formulas of the form $G\rar H$,
where $G$ is a simple formula and $H$ is a disjunction of atoms, will be
called {\sl simple rules}.\footnote{Note that a simple rule is not a rule
in the sense of the programming language described above;
it is an infinitary propositional formula of a
special syntactic form.}
A {\sl simple program} is a set of simple rules.

For example,~(\ref{eq:nodtight}),~(\ref{eq:nodtight2}) can be rewritten as
simple rules
\beq
(\top\rar p \lor \neg p) \rar p,
\eeq{eq:n}
\beq
(\top\rar \neg \neg p) \rar p.
\eeq{eq:n2}
In the proof of Main Theorem 
we will show how any
formula obtained by applying transformation~$\tau$ to a program
can be transformed into a simple rule with the same meaning.

In the statement of Main Lemma below, we refer to simple programs that are
``FT-tight'' and ``FLP-tight.'' The lemma asserts that if a program is
FT-tight then its FLP-stable models are FT-stable;  if a program is
FLP-tight then its FT-stable models are FLP-stable.  To describe these
two classes of simple programs we need the following preliminary
definitions.

An atom~$p$ {\sl occurs strictly positively} in a simple formula~$F$ if
there is a conjunctive term $\mp^\land \rar \mq^\lor$ in~$F$ such that~$p$
belongs to $\mq$.  An atom~$p$ {\sl occurs positively} in a simple
formula~$F$ if there is a conjunctive term $\mp^\land \rar \mq^\lor$ in~$F$
such that~$p$ or $\neg \neg p$ belongs to $\mq$.

We define the {\sl (extended
positive) dependency graph} of a simple
program~$\mathcal{H}$ to be the graph that has
\begin{itemize}
\item all atoms occurring in $\mathcal{H}$ as its vertices, and
\item an edge from $p$ to $q$ if for some formula $G \rar H$
in $\mathcal{H}$,~$p$ is a disjunctive term in $H$ and $q$ occurs positively in $G$.
\end{itemize}
For example, the simple programs~(\ref{eq:n}),~(\ref{eq:n2})
have the same dependency graph: a self-loop at~$p$.\footnote{We call the
graph {\sl extended} positive to emphasize the fact that the definition
reqires~$q$ to occur positively in~$G$, but not strictly positively.}

A simple implication $\mp^\land \rar \mq^\lor$ will be called {\sl positive} if
$\mq$ is a set of atoms, and {\sl non-positive} otherwise. An edge from
$p$ to $q$ in the dependency graph of a simple program $\mh$ will be called
{\sl FT-critical} if for some formula $G \rar H$ in $\mathcal{H}$,~$p$ is
a disjunctive term in $H$ and $q$ occurs strictly positively in
some non-positive conjunctive term $D$ of~$G$.
We call a simple program {\sl FT-tight} if its dependency graph has no path
containing infinitely many FT-critical edges.\footnote{In the case of a finite
dependency graph, this condition is equivalent to requiring that no cycle
contains an FT-critical edge.}

Consider, for example, the dependency graph of program~\eqref{eq:n}.  Its
only edge---the self-loop at~$p$---is FT-critical, because the implication
$\top\rar p\lor\neg p$ is non-positive, and~$p$ occurs strictly positively
in it.  It follows that the program is not FT-tight: consider the path
consisting of infinitely many repetitions of this self-loop.  On the other
hand, in the dependency graph of program~\eqref{eq:n2} the same edge
is not FT-critical, because~$p$ does not occur strictly positively
in the implication $\top\rar\neg\neg p$.  Program~\eqref{eq:n2} is FT-tight.

An edge from $p$ to $q$ in the dependency graph of a simple program $\mh$
will be called {\sl FLP-critical} if for some simple rule $G \rar H$
in $\mathcal{H}$,~$p$ is a disjunctive term in $H$ and, for some
conjunctive term $\mp^\land \rar \mq^\lor$ of~$G$, $\neg \neg q$ belongs
to~$\mq$.
We call a simple program {\sl FLP-tight} if its dependency graph has no path
containing infinitely many FLP-critical edges.

It is clear that if there are no extended literals of the form~$\neg\neg p$
in a simple program then there are no FLP-critical edges in its dependency
graph, so that the program is FLP-tight.  For example,~\eqref{eq:n} is
a simple program of this kind.  On the other hand, in the dependency graph
of program~\eqref{eq:n2} the self-loop at~$p$ is FLP-critical, so that the
program is not FLP-tight.

\medskip\noindent
{\em Main Lemma}

\noindent
For any simple program~$\mh$,
\begin{itemize}
\item[(a)] if~$\mh$ is FT-tight then all FLP-stable models of~$\mh$ are
FT-stable;
\item[(b)] if~$\mh$ is FLP-tight then all FT-stable models of~$\mh$ are
FLP-stable.
\end{itemize}
\medskip

The proof of Main Lemma is given in the appendix.  Some parts of the
proof are inspired by results from~\cite{fer06}.

\section{Proof of Main Theorem}

Consider a program~$\Pi$ in the programming language described at the beginning
of this paper.
Every formula in~$\tau\Pi$
corresponds to one of the rules~(\ref{eq:rule}) of~$\Pi$ and has the form
\beq
\tau((B_1,\dots,B_n)^{\bf x}_{\bf r}) \rar
\tau{(H_1)}^{\bf x}_{\bf r}\,\lor\,\cdots\,\lor\,
\tau{(H_k)}^{\bf x}_{\bf r}
\eeq{trr}
where $\bf x$ is the list of all global variables of the rule, and $\bf r$
is a tuple of precomputed terms such that all literals
${(H_i)}^{\bf x}_{\bf r}$, ${(B_j)}^{\bf x}_{\bf r}$ are well-formed.
The consequent of~(\ref{trr}) is a disjunction of atoms over the
signature~$\sigma_0$---the set of
atoms of the form $p({\bf t})$, where $p$ is a symbolic constant and ${\bf t}$
is a tuple of precomputed terms.  The antecedent of~(\ref{trr}) is a
conjunction of formulas of three types:
\begin{itemize}
\item[(i)] literals over~$\sigma_0$---each of them is
$\tau(L^{\bf x}_{\bf r})$ for some symbolic literal~$L$ from the body of
the rule;
\item[(ii)] implications of form~(\ref{eq:agdef})---each of them is
$\tau(E^{\bf x}_{\bf r})$ for some aggregate atom~$E$ from the body of
the rule;
\item[(iii)] negations of such implications---each of them is
$\neg\tau(E^{\bf x}_{\bf r})$ for some aggregate literal~$\no\ E$
from the body of the rule.
\end{itemize}
Each of the formulas $\tau({\bf L}^{{\bf x}}_{\bf r})$ in~(\ref{eq:agdef})
is a conjunction of literals over~$\sigma_0$.  It follows
that~(\ref{eq:agdef}) can be represented in the form
\beq
(\ma_1)^\land \land \bigwedge_{p\in\ma_2} \neg p \,\rar\, \mc^\lor,
\eeq{agsr}
where $\ma_1$ and and $\ma_2$ are sets of atoms from~$\sigma_0$, and $\mc$ is
a set of conjunctions of literals over~$\sigma_0$.

Consider the simple program~$\mh$ obtained from $\tau\Pi$ by transforming
the conjunctive terms of the antecedents of its formulas~(\ref{trr}) as
follows:
\begin{itemize}
\item Every literal~$L$ is replaced by the simple implication
\beq
\top\rar L.
\eeq{lit}
\item Every implication (\ref{agsr}) is replaced by the simple
formula
\beq\bigwedge_\phi\left(
(\ma_1)^\land \,\rar\, \bigvee_{p\in\ma_2} \neg\neg p
\,\lor\, \bigvee_{C\in \mc,\ C\ \hbox{\scriptsize is non-empty}}\phi(C)\right),
\eeq{agsrexp}
where the big conjunction extends over all functions~$\phi$ that map every
non-empty conjunction from $\mc$ to one of its conjunctive terms.
\item Every negated implication (\ref{agsr}) is replaced by the simple
formula
\beq
\bigwedge_{p\in \ma_1}(\top\rar p) \;\land\;
\bigwedge_{p\in \ma_2}(\top\rar \neg p) \;\land\;
\bigwedge_{C\in\mc}\,\,\,\bigvee_{L\ \hbox{\scriptsize is a conjunctive term of}\ C}
(\top\rar \neg L).
\eeq{negag}
\end{itemize}
Each conjunctive term of the antecedent of~(\ref{trr}) is equivalent to the
simple formula that replaces it in~$\mh$.  It follows that $\tau\Pi$
and~$\mh$ have the same FLP-stable models.  On the other hand, $\tau\Pi$
and~$\mh$ have the same models in the infinitary logic of here-and-there,
and consequently the same FT-stable models.  Consequently, the FLP-stable
models of~$\Pi$ can be characterized as the FLP-stable models of~$\mh$,
and the FT-stable models of~$\Pi$ can be characterized as the FT-stable
models of~$\mh$.

To derive the main theorem from the main lemma, we will establish two claims that
relate the predicate dependency graph of~$\Pi$ to the dependency graph
of~$\mh$:

\medskip\noindent{\it Claim 1.} If there is an edge from an atom
$p(t_1,\dots,t_k)$ to an atom $q(s_1,\dots,s_l)$ in the dependency graph
of~$\mh$ then there is an edge from~$p/k$ to~$q/l$ in the predicate
dependency graph of~$\Pi$.

\medskip\noindent{\it Claim 2.} If the edge from $p(t_1,\dots,t_k)$ to
$q(s_1,\dots,s_l)$ in the dependency graph~$\mh$ is FT-critical or
FLP-critical then~$\Pi$ contains a rule~(\ref{eq:rule}) such that
\begin{itemize}
\item $p/k$ is the predicate symbol of one of the atoms~$H_i$, and
\item $q/l$ is the predicate symbol of an atom occurring in one of the
       non-positive aggregate literals~$B_j$.
\end{itemize}

\medskip
Using these claims, we will show
that if the dependency graph of~$\mh$ has a path with
infinitely many FT-critical edges or infinitely many FLP-critical edges
then we can find a non-positive aggregate literal recursive with
respect to~$\Pi$.  The assertion of the theorem will immediately follow
then by Main Lemma.

Assume that
$p_1({\bf t}^1),p_2({\bf t}^2),\dots$ is a path in the dependency graph
of~$\mh$ that contains infinitely many FT-critical edges (for FLP-critical
edges, the reasoning is the same).  By Claim~1, the sequence
$p_1/k_1,\,p_2/k_2,\,\dots$, where
$k_i$ is the length of ${\bf t}^i$, is a path in the predicate dependency
graph of~$\Pi$.  Since that graph is finite, there exists a positive
integer~$a$ such that all vertices $p_a/k_a,\,p_{a+1}/k_{a+1},\,\dots$ belong
to the same strongly connected component.
Since the path $p_1({\bf t}^1),p_2({\bf t}^2),\dots$
contains infinitely many FT-critical edges, there exists a~$b\geq a$ such
that the edge from $p_b({\bf t}^b)$ to $p_{b+1}({\bf t}^{b+1})$ is
FT-critical.  By Claim~2, it follows that $\Pi$ contains a
rule~(\ref{eq:rule}) such that $p_b/k_b$ is the predicate symbol of one
of the atoms~$H_i$, and $p_{b+1}/k_{b+1}$ is the predicate symbol of an
atom occurring in one of the non-positive aggregate literals~$B_j$.
Since $p_b/k_b$ and $p_{b+1}/k_{b+1}$ belong to the same strongly
connected component, there exists a path from $p_{b+1}/k_{b+1}$ to
$p_b/k_b$.  It follows that~$B_i$ is recursive with respect to~$\Pi$.

\medskip\noindent{\it Proof of Claim~1.}
If there is an edge from $p(t_1,\dots,t_k)$ to $q(s_1,\dots,s_l)$ in the
dependency graph of~$\mh$ then~$\Pi$ contains a rule~(\ref{eq:rule}) such
that $p(t_1,\dots,t_k)$ occurs in the consequent of one
of the implications~(\ref{trr}) corresponding to this rule, and
$q(s_1,\dots,s_l)$ occurs in one of the formulas
\hbox{(\ref{lit})--(\ref{negag})}.
Then $q(s_1,\dots,s_l)$ occurs also in the antecedent of~(\ref{trr}).  It
follows
that $p/k$ is the predicate symbol of one of the atoms occuring in the
head of the rule, and $q/l$ is the predicate symbol of one of the atoms
occurring in its body.

\medskip\noindent{\it Proof of Claim~2.}
If the edge from $p(t_1,\dots,t_k)$ to $q(s_1,\dots,s_l)$ in the dependency
graph of~$\mh$ is FT-critical then~$\Pi$ contains a rule~(\ref{eq:rule}) such
that $p(t_1,\dots,t_k)$ occurs in the consequent of one
of the implications~(\ref{trr}) corresponding to this rule, and
$q(s_1,\dots,s_l)$ occurs strictly positively in one of the
non-positive conjunctive terms $\mp^\land \rar \mq^\lor$ of one of the
simple conjunctions \hbox{(\ref{lit})--(\ref{negag})}.  If a formula of
form~(\ref{lit}) is non-positive then no atoms occur in it
strictly positively.  Consequently $\mp^\land \rar \mq^\lor$ is a
conjunctive term of one of the formulas~(\ref{agsrexp}) or~(\ref{negag}),
and it corresponds to an aggregate literal from the body of the rule.
That aggregate literal is not positive, because for any positive literal~$E$
no conjunctive term of the corresponding simple conjunction~(\ref{agsrexp})
is non-positive.  It follows that $p/k$ is the predicate symbol of one of
the atoms in the head of the rule, and $q/l$ is the predicate symbol of
an atom from a non-positive aggregate literal in the body.

For FLP-critical edges the reasoning is similar, using the fact that
formulas of form~(\ref{lit}) do not contain double negations, and
neither do formulas of form~(\ref{agsrexp}) corresponding to positive
aggregate literals.

\section{Related Work}

The equivalence between the FLP and FT approaches to defining stable models
for programs without aggregates was established by \cite{fab04},
Theorem~3.  The fact
that this equivalence is not destroyed by the use of positive aggregates was
proved by \cite{fer05}, Theorem~3.  That result is further generalized by
\cite{bar11a}, Theorem 7.

The program
$$\ba l
q(\oo 1),\\
r \ar \ii{count} \{X : \no\ p(X),\ q(X)\} = \oo 1
\ea$$
has no recursive aggregates but is not covered by any of the results
quoted above because it contains a negative literal in the conditions of an
aggregate atom.

\section{Conclusion}

An oversight in the semantics proposed in the ASP-Core document can be
corrected using a translation into the language of infinitary
propositional formulas.  The main theorem of this paper describes
conditions when stable
models in the sense of the (corrected) ASP-Core definition are identical
to stable models in the sense of the input language of~{\sc clingo}.

The main lemma asserts that if a set of infinitary propositional formulas
is FT-tight then its FLP-stable models are FT-stable, and if it is FLP-tight
then its FT-stable models are FLP-stable.

\section*{Acknowledgements}
Martin Gebser made a valuable contribution to our work by pointing out an
oversight in an earlier version of the proof and suggesting a way to correct
it.  We are grateful to Wolfgang Faber, Jorge Fandi\~no,
Michael Gelfond, and Yuanlin Zhang
for useful discussions related to the topic of this paper, and to the anonymous
referees for their comments.  This research was partially supported
by the National Science Foundation under Grant IIS-1422455.

\bibliography{bib}

\newpage
\appendix

\section{Proof of Main Lemma}

If $F$ is a simple disjunction
and $X$ is a set of atoms, by $F^X_\bot$
we denote the simple disjunction obtained from $F$ by removing all disjunctive
terms that belong to~$X$.\footnote{This notation is motivated by the
fact that $F^X_\bot$ is the result of substituting~$\bot$ for
the disjunctive members of~$F$ that belong to~$X$, rewritten as a
simple disjunction.}
If $F$ is a simple implication $\mp^\land \rar \mq^\lor$ then by $F^X_\bot$ we
denote $F$ itself if $\mp \cap X$ is non-empty, and $\mp^\land \rar
(\mq^\lor)^X_\bot$ otherwise.\footnote{This operation is a special case of the NES operation defined by
\cite{fer06}. Distinguishing between the two cases in the definition is crucial
for Lemmas~\ref{lem:bottoredbot} and \ref{lem:redtoredbot}.}
If~$F$ is a simple formula then
$F^X_\bot$ stands for the simple formula obtained by applying this
transformation to all conjunctive terms of~$F$.
It is clear that $F^X_\bot$ entails~$F$.

For any simple program~$\mh$,
by $\mh^X_\bot$ we denote the simple program obtained from $\mh$ by
applying this transformation to~$G$ and $H$ for each
simple rule $G\rar H$ in $\mh$.

\begin{lemma}\label{lem:subbot}
 Let $I$ be a model of a simple program $\mh$, $X$ be
a set of atoms, and $K$ be a subset of $X$ such that the dependency graph
of $\mh$ has no edges from atoms in $K$ to atoms in $X \setminus K$.
If $I$ satisfies $\mh^X_\bot$, then $I$ satisfies~$\mh^K_\bot$.
\end{lemma}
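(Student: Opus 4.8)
The plan is to verify the conclusion one simple rule at a time: I would fix a simple rule $G\rar H$ of $\mh$ and show that $I$ satisfies its image $G^K_\bot\rar H^K_\bot$ in $\mh^K_\bot$. Assuming $I\models G^K_\bot$, the goal is $I\models H^K_\bot$. First I would record two easy consequences of the hypotheses. Since $K$ is a set of atoms, $G^K_\bot$ entails $G$, so $I\models G^K_\bot$ gives $I\models G$, and then $I\models\mh$ yields $I\models H$; thus at least one disjunctive atom of $H$ is true. If one such true atom lies outside $K$, it survives in $H^K_\bot$ and we are done at once. So the entire difficulty is concentrated in the case where every true atom of $H$ belongs to $K$.

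In that case I would argue by contradiction, bringing in the hypothesis $I\models\mh^X_\bot$. Since all true atoms of $H$ lie in $K\subseteq X$, every atom surviving in $H^X_\bot$ is false, so $I\not\models H^X_\bot$; from the rule $G^X_\bot\rar H^X_\bot$ of $\mh^X_\bot$ I then obtain $I\not\models G^X_\bot$. Hence some conjunctive term $F=\mp^\land\rar\mq^\lor$ of $G$ satisfies $I\not\models F^X_\bot$, while $I\models G^K_\bot$ gives $I\models F^K_\bot$. A short case analysis on $F$, according to whether $\mp$ meets $K$, meets $X\setminus K$, or misses $X$ entirely, shows that in the first two cases $F^K_\bot$ entails $F^X_\bot$ (the antecedent is left alone and the consequent can only shrink, and a shorter disjunction is stronger), contradicting the choice of $F$. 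The only surviving case is $\mp\cap X=\emptyset$, where $F^K_\bot$ and $F^X_\bot$ differ only in their consequents $(\mq^\lor)^K_\bot$ and $(\mq^\lor)^X_\bot$; a disjunctive term witnessing $I\models(\mq^\lor)^K_\bot$ but absent from $(\mq^\lor)^X_\bot$ must be an atom $a\in X\setminus K$ that is true in $I$ and occurs (strictly) positively in $G$.

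This extraction of the atom $a$ is the step I expect to be the crux, and it is precisely where the edge hypothesis enters. Having produced a true atom $a\in X\setminus K$ occurring positively in $G$, and recalling a true atom $p\in K$ of the head $H$, the definition of the dependency graph of $\mh$ places an edge from $p$ to $a$ --- an edge from $K$ into $X\setminus K$ --- contradicting the assumption that no such edge exists. This contradiction rules out the hard case, so $I\models H^K_\bot$ holds for every rule, and therefore $I\models\mh^K_\bot$. The main bookkeeping burden will be the two-parameter case split on $\mp\cap K$ versus $\mp\cap X$, together with keeping straight that, because $K\subseteq X$, the operation $(\cdot)^X_\bot$ always deletes at least as many disjunctive atoms as $(\cdot)^K_\bot$, so the only new true witness it can destroy lives in $X\setminus K$.
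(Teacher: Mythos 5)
Your proof is correct and takes essentially the same approach as the paper's: both arguments hinge on locating a true head atom $p\in K$ and then, from the mismatch between $G^K_\bot$ and $G^X_\bot$, extracting a true atom of $X\setminus K$ that occurs positively in $G$, producing an edge from $K$ into $X\setminus K$ that the hypothesis forbids. The only differences are organizational---you argue rule by rule rather than by a global contradiction, and you spell out the case analysis on $\mp\cap K$ versus $\mp\cap X$ that the paper compresses into one sentence.
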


\begin{proof}
Assume on the contrary that $I$ does not satisfy $\mh^K_\bot$. Then there
is a simple rule $G \rar H$ in $\mh$ such that $I$ satisfies $G^K_\bot$ but
does not satisfy $H^K_\bot$. Further, since $I$ satisfies
$G^K_\bot$ and $G^K_\bot$ entails $G$,~$I$ satisfies
$G$ as well. Then since~$I$ is a model of $\mh$,~$I$ satisfies $H$. Since~$I$ satisfies $H$ but does
not satisfy $H^K_\bot$, there is some atom $p$ in~$H$ that is also in $K$.
Now, since~$I$ satisfies $G^K_\bot$ it must also satisfy $G^X_\bot$.
Indeed, if this were not the case, there would be some atom~$q$ occurring
positively in $G$ and also occurring in $X \setminus K$. Then there would be an edge from $p \in K$
to $q \in X \setminus K$, contradicting the assumption of the lemma.
On the other hand, $I$ does not satisfy
$H^X_\bot$, since $I$ does not satisfy $H^K_\bot$ and $K$ is a subset of~$X$.
We may conclude that $I$ does not satisfy $G^X_\bot \rar H^X_\bot$
and therefore does not satisfy $\mh^X_\bot$.
\end{proof}

\begin{lemma}\label{lem:lnsft}
Let $I$ be a model of a simple program $\mh$ and let $K$ be a subset of $I$ such
that there are no FT-critical edges in the subgraph of the dependency graph of $\mh$ induced by $K$.
If $I \models \mh^K_\bot$ then
$I \setminus K$ satisfies the FLP-reduct of $\mh$ with respect to $I$.

\end{lemma}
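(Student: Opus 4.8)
The plan is to check, rule by rule, that $I\setminus K$ satisfies every formula of $\FLP(\mh,I)$---that is, every simple rule $G\rar H$ of $\mh$ with $I\models G$---and to argue each such check by contradiction. So I would fix a rule $G\rar H$ with $I\models G$ and assume $I\setminus K\models G$ but $I\setminus K\not\models H$. Since $H$ is a disjunction of atoms, the latter says $H\cap(I\setminus K)=\emptyset$, i.e.\ $H\cap I\subseteq K$. As $I$ is a model of $\mh$ and $I\models G$, we have $I\models H$, so some head atom $h_0$ lies in $H\cap I$ and hence in $K$. This $h_0$ is the head vertex that will later supply a forbidden edge.

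Next I would turn to the hypothesis $I\models\mh^K_\bot$, which for the present rule gives $I\models G^K_\bot\rar H^K_\bot$. Because $H^K_\bot$ is $H$ with the atoms of $K$ deleted, the assumption $H\cap(I\setminus K)=\emptyset$ yields $I\not\models H^K_\bot$, and therefore $I\not\models G^K_\bot$. Hence some conjunctive term $D=\mp^\land\rar\mq^\lor$ of $G$ has $I\not\models D^K_\bot$. The case $\mp\cap K\neq\emptyset$ is excluded at once, since then $D^K_\bot=D$ and $I\models D$ (as $I\models G$); so $\mp\cap K=\emptyset$, $D^K_\bot=\mp^\land\rar(\mq^\lor)^K_\bot$, and $I\models\mp^\land$ while $I\not\models(\mq^\lor)^K_\bot$. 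Since $\mp\cap K=\emptyset$, the atoms of $\mp$ actually lie in $I\setminus K$, so $I\setminus K\models\mp^\land$, and from $I\setminus K\models G$ I then get $I\setminus K\models\mq^\lor$.

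The core of the proof is a case analysis, carried out simultaneously, on the extended literal of $\mq$ witnessing $I\setminus K\models\mq^\lor$ and on one witnessing $I\models\mq^\lor$ (the latter exists because $I\models D$ and $I\models\mp^\land$). The decisive point is that $(\cdot)^K_\bot$ removes only plain atoms belonging to $K$, keeping every $\neg q$ and $\neg\neg q$ in place. From this I expect to conclude that the $I\setminus K$ witness must be some $\neg q'$ with $q'\in K$ (a plain atom or a double negation satisfied by $I\setminus K$ would be retained and also satisfied by $I$, contradicting $I\not\models(\mq^\lor)^K_\bot$), so $D$ is non-positive; and that the $I$ witness must be a literal that $(\cdot)^K_\bot$ deletes, i.e.\ a plain atom $q\in\mq\cap K$, which therefore occurs strictly positively in $D$. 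Then $h_0\in K$ is a disjunctive term of $H$ while $q\in K$ occurs strictly positively in the non-positive conjunctive term $D$ of $G$, so the dependency graph of $\mh$ has an FT-critical edge from $h_0$ to $q$ with both endpoints in $K$---contradicting the assumption that the subgraph induced by $K$ has no FT-critical edges.

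I expect this last case analysis to be the main obstacle. One must keep track of both witnessing literals and of the three shapes of extended literal, and exploit the exact behaviour of $(\cdot)^K_\bot$ to force precisely the configuration---a non-positive body term containing a strictly positive atom of $K$, under a head atom of $K$---demanded by the definition of an FT-critical edge. By contrast, locating the head atom $h_0\in K$ and the offending term $D$ is routine reasoning with the FLP-reduct, the implication $G^K_\bot\rar H^K_\bot$, and the split on whether $\mp$ meets $K$.
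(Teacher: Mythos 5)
Your proof is correct and takes essentially the same route as the paper's: the same analysis of the operation $(\cdot)^K_\bot$ on the head and on an offending conjunctive term $D$ of the body, culminating in the same punchline---an FT-critical edge from a head atom in $K$ to an atom of $K$ occurring strictly positively in a non-positive $D$, contradicting the hypothesis on the induced subgraph. The only difference is presentational: you argue by contradiction and merge into one witness analysis what the paper handles as three direct cases ($I\models H^K_\bot$; $D$ positive, giving $I\setminus K\not\models G$; $D$ non-positive, giving $I\setminus K\models H$).
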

\begin{proof}
Consider a simple rule $G \rar H$ in $\mh$ such that $I \models G$, so that
$G \rar H$
is in the FLP-reduct of $\mh$. We will show that $I\setminus K$ satisfies
$G \rar H$. Since $I
\models \mh^K_\bot$, either $I \not \models G^K_\bot$ or $I \models H^K_\bot$.

\medskip

\noindent {\it Case 1: $I \models H^K_\bot$.}  Then~$H$ has a disjunctive
term that belongs to~$I$ but not to~$K$, so that \hbox{$I \setminus K
\models H$}. We conclude that $I \setminus K \models G \rar H$.

\medskip

\noindent {\it Case 2: $I \not \models G^K_\bot$.} Consider a
conjunctive term~$\mp^\land \rar \mq^\lor$ in $G$ such that $I \not \models
(\mp^\land \rar \mq^\lor)^K_\bot$.
Since $I \models G$, $I \models \mp^\land \rar \mq^\lor$. It follows that $\mp
\cap K$ is empty and that $I$ satisfies both $\mp^\land$ and~$\mq^\lor$ but
does not satisfy $(\mq^\lor)^K_\bot$.

\medskip

\noindent {\it Case 2.1: $\ma^\land \rar \mq^\lor$ is positive.}
Then $\mq$ is a set of atoms.  Since $I \not \models (\mq^\lor)^K_\bot$,
all atoms from $I$ that are in $\mq$ are also in $K$. So $I \setminus
K \not \models \mq^\lor$.
Since $\mp \cap K$ is empty
and $I$ satisfies $\mp^\land$, $I \setminus K$ also satisfies $\mp^\land$. We may
conclude that $I \setminus K \not \models G$ so that $I \setminus K \models G \rar H$.

\medskip

\noindent {\it Case 2.2: $\ma^\land \rar \mq^\lor$ is non-positive.}
Since $I$ satisfies $\mq^\lor$ but not $(\mq^\lor)^K_\bot$, there is an
atomic disjunctive term $p$ in~$\mq$ that belongs to $I \cap K$.
Then $p$ occurs positively in $G$.
It follows that no disjunctive term in $H$ occurs in $K$. (If there were such a
disjunctive term $q$ in $H$ then, since $\ma^\land \rar \mq^\lor$ is non-positive, there
would be an FT-critical edge from $q$ to $p$ in the subgraph of the dependency
graph of $\mh$ induced by $K$.
But the condition
of the lemma stipulates that there are no FT-critical edges in
that graph.)
Since $I$ satisfies $G$ and is a model of the program, $I$ satisfies $H$
as well.  Since no atoms from $K$ occur in $H$, it follows that
$I \setminus K$ satisfies $H$, so that $I \setminus K$
satisfies $G \rar H$.
\end{proof}

\begin{lemma}\label{lem:findK}
If $\mh$ is an FT-tight simple program and $X$ is a non-empty set of atoms, then there
exists a non-empty subset $K$ of $X$ such that in the subgraph of the dependency
graph of $\mh$ induced by~$X$
\begin{enumerate}
\item[(i)] there are no edges from $K$ to atoms in $X \setminus K$, and
\item[(ii)] no atom in $K$ has outgoing FT-critical edges.
\end{enumerate}
\end{lemma}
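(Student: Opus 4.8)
The plan is to define $K$ explicitly as the set of those atoms of $X$ from which no FT-critical edge can be reached inside the induced subgraph, then to verify the two closure properties directly and to derive non-emptiness from FT-tightness by contradiction. Write $G_X$ for the subgraph of the dependency graph of $\mh$ induced by $X$, and call a walk in $G_X$ \emph{dangerous} if it traverses at least one FT-critical edge of $G_X$. I would set
\[
K=\{\,v\in X : \text{no walk in } G_X \text{ starting at } v \text{ is dangerous}\,\}.
\]

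First I would check property~(i). Suppose $v\in K$ and $G_X$ has an edge from $v$ to some $w\in X$; I claim $w\in K$. If not, there is a dangerous walk in $G_X$ starting at $w$, and prepending the edge from $v$ to $w$ yields a dangerous walk starting at $v$, contradicting $v\in K$. Hence every $G_X$-edge leaving $K$ lands in $K$, which is exactly~(i). Property~(ii) is immediate from the definition: an FT-critical out-edge of some $v\in K$ would by itself constitute a dangerous walk starting at $v$, again contradicting $v\in K$.

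The main point, and the only place FT-tightness is used, is non-emptiness. I would argue by contradiction: assume $K=\emptyset$. Since $X$ is non-empty, fix $v_0\in X$; because $v_0\notin K$, there is a dangerous walk in $G_X$ from $v_0$, ending at some $v_1\in X$. Since $v_1\notin K$, there is a dangerous walk from $v_1$ ending at some $v_2\in X$, and so on. Concatenating these walks produces an infinite walk in $G_X$---hence an infinite path in the dependency graph of $\mh$, since $G_X$ is a subgraph---that traverses at least one FT-critical edge inside each segment and therefore contains infinitely many FT-critical edges. This contradicts the assumption that $\mh$ is FT-tight, so $K$ must be non-empty.

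I do not expect a genuine obstacle here; the construction is forced once one notices that (i) and (ii) together say precisely that $K$ is closed under taking $G_X$-successors and avoids FT-critical edges, i.e.\ that no dangerous walk leaves $K$. The one point that needs care is purely bookkeeping: confirming that FT-criticality of an edge of $G_X$ coincides with its FT-criticality as an edge of the full dependency graph (it does, since that property is read off from the formulas of $\mh$ and not from the ambient graph), so that a dangerous infinite walk in $G_X$ really is a witness against FT-tightness of $\mh$.
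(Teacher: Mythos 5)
Your proof is correct and takes essentially the same approach as the paper: the paper's proof picks a single vertex $b$ of $X$ from which no path in the induced subgraph traverses an FT-critical edge (justifying its existence by exactly your concatenation-of-dangerous-walks contradiction) and takes $K$ to be the set of vertices reachable from $b$, while you take $K$ to be the set of \emph{all} such ``safe'' vertices; conditions (i) and (ii) are then verified the same way in both cases. The only difference is that your $K$ is the maximal set with the required properties, whereas the paper's is a particular subset of it---mathematically the two arguments hinge on the identical use of FT-tightness.
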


\begin{proof}
Consider the subgraph of the dependency graph of $\mh$ induced by $X$.
It contains some vertex $b$ such that there is no path starting at
$b$ that contains an FT-critical edge.
(If there were no such vertex $b$, then there would be a path
containing infinitely many FT-critical edges and $\mh$ would not be FT-tight.)
Take $K$ to be the set of all vertices
reachable from $b$. It is clear that condition (i) is satisfied.
Furthermore, since all atoms in $K$ are reachable from $b$, and no path starting
at~$b$ contains an FT-critical edge, none of the atoms in $K$ have outgoing
FT-critical edges in the subgraph of the dependency graph of $\mh$ induced by
$X$. So condition (ii) is satisfied as~well.
\end{proof}

\begin{lemma}\label{lem:dtosft}
If $\mh$ is an FT-tight simple program and $I$ is an FLP-stable model of $\mh$, then
for every non-empty subset $X$ of $I$, $I \not
\models \mh^X_\bot$.
\end{lemma}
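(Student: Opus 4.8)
The plan is to argue by contradiction: assume that $I \models \mh^X_\bot$ for some non-empty $X \subseteq I$, and from this produce a proper subset of $I$ that still satisfies the FLP-reduct $\FLP(\mh,I)$, contradicting the minimality built into FLP-stability. Since $I$ is FLP-stable it is in particular a model of $\mh$ (an FLP-stable model always satisfies the program), which is the standing hypothesis needed to invoke Lemmas~\ref{lem:subbot} and~\ref{lem:lnsft}.

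First I would apply Lemma~\ref{lem:findK} to the FT-tight program $\mh$ and the non-empty set $X$, obtaining a non-empty subset $K \subseteq X$ satisfying the two conclusions: (i) there are no edges from $K$ to $X \setminus K$ in the subgraph induced by $X$, and (ii) no atom of $K$ has an outgoing FT-critical edge in that subgraph. Property~(i) is precisely the edge condition required by Lemma~\ref{lem:subbot} (edges with both endpoints in $X$ coincide with edges of the full dependency graph), so from $I \models \mh^X_\bot$ that lemma yields $I \models \mh^K_\bot$.

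Next I would feed $K$ into Lemma~\ref{lem:lnsft}. The point that needs the most care is verifying that property~(ii) delivers the hypothesis of that lemma, namely that the subgraph induced by $K$ contains no FT-critical edges. An FT-critical edge lying inside $K$ would in particular be an outgoing FT-critical edge from an atom of $K$, so its existence would contradict~(ii); and since being FT-critical is a property of an edge of $\mh$ that does not depend on which induced subgraph we restrict to, no such edge can survive inside $K$. With $K \subseteq I$ and $I \models \mh^K_\bot$ in hand, Lemma~\ref{lem:lnsft} gives that $I \setminus K$ satisfies $\FLP(\mh,I)$.

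Finally, because $K$ is non-empty and contained in $I$, the interpretation $I \setminus K$ is a proper subset of $I$ that models $\FLP(\mh,I)$, whereas $I$, being FLP-stable, is minimal among the models of $\FLP(\mh,I)$ --- a contradiction. Hence no non-empty $X \subseteq I$ can satisfy $\mh^X_\bot$. I expect the main obstacle to be bookkeeping about where edges live: confirming that conclusion~(i) of Lemma~\ref{lem:findK} matches the edge condition of Lemma~\ref{lem:subbot} and that conclusion~(ii) matches the no-FT-critical-edge condition of Lemma~\ref{lem:lnsft}. Once those alignments are checked, the argument is a direct chaining of the three preceding lemmas.
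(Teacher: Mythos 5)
Your proof is correct and follows exactly the same route as the paper's own proof: contradiction via Lemma~\ref{lem:findK} to obtain $K$, then Lemma~\ref{lem:subbot} to pass from $\mh^X_\bot$ to $\mh^K_\bot$, then Lemma~\ref{lem:lnsft} to produce the smaller model $I \setminus K$ of $\FLP(\mh,I)$. Your extra care about why the induced-subgraph conditions of Lemma~\ref{lem:findK} match the hypotheses of Lemmas~\ref{lem:subbot} and~\ref{lem:lnsft} is exactly the point the paper treats briefly, and your justification of it is sound.
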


\begin{proof}
Assume on the contrary that there is some non-empty subset $X$ of $I$ such that $I
\models \mh^X_\bot$.
By Lemma~\ref{lem:findK}, there is a non-empty subset~$K$ of~$X$ meeting
conditions (i) and (ii).
Since $I \models \mh^X_\bot$ and $K$ satisfies (i), by
Lemma~\ref{lem:subbot} we may conclude that $I \models \mh^K_\bot$.
Since $K$ satisfies condition~(ii) and is a subset of $X$, it is clear that there are no FT-critical
edges in the subgraph of the dependency graph of $\mh$ induced by $K$. So by Lemma~\ref{lem:lnsft}, $I \setminus K$
satisfies the FLP-reduct of $\mh$,
contradicting the assumption that $I$ is FLP-stable.
\end{proof}

\begin{lemma}\label{lem:bottoredbot}
Let $G$ be a simple disjunction or a simple formula, and let~$X$
be a set of atoms.  An interpretation~$I$ satisfies
$G^X_\bot$ iff it satisfies $FT(G, I)^X_\bot$.
\end{lemma}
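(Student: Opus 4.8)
The plan is to reduce the statement to two base cases and then carry out a direct case analysis, leaning on two facts already established in the excerpt: that $I\models\FT(F,I)$ iff $I\models F$, and that $\FT(F,I)$ is equivalent to $\bot$ when $I\not\models F$. First I would reduce to simple disjunctions and to single simple implications. A simple formula is a conjunction $\{F_j\}^\land$ of simple implications, and both operations distribute over this conjunction: by definition $G^X_\bot$ applies $\cdot^X_\bot$ to each $F_j$, while $\FT(G,I)=\{\FT(F_j,I)\}^\land$, so $\FT(G,I)^X_\bot$ applies $\cdot^X_\bot$ to each $\FT(F_j,I)$. Hence $I\models G^X_\bot$ iff $I\models F_j^X_\bot$ for all $j$, and likewise for the reduct side, so it suffices to treat a simple disjunction and a single simple implication.

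For the disjunction case, write $G=D^\lor$ with $D$ a set of extended literals. From the definition of the FT-reduct one computes that, for an atom $p$, $\FT(p,I)$ is $p$ if $p\in I$ and $\bot$ otherwise, and that $\FT(\neg p,I)$ and $\FT(\neg\neg p,I)$ are each equivalent to $\top$ or to $\bot$ according to whether $I\models\neg p$ (resp. $I\models\neg\neg p$). The operation $\cdot^X_\bot$ deletes from a disjunction exactly the atomic disjuncts lying in $X$: on the left this removes the members $p\in D$ with $p\in X$, while on the right the only atomic disjuncts of $\FT(G,I)$ are the surviving $\FT(p,I)=p$ with $p\in I$, of which $\cdot^X_\bot$ deletes those with $p\in X$. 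I would then check that the surviving disjuncts contribute identically on the two sides, so that $G^X_\bot$ and $\FT(G,I)^X_\bot$ have the same truth value under $I$.

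For the implication case $F=\mp^\land\rar\mq^\lor$, I would split on whether $I\models\mp^\land$ and $I\models\mq^\lor$, together with the two branches of the definition of $\cdot^X_\bot$ (whether $\mp\cap X=\emptyset$). If $I\models\mp^\land$ but $I\not\models\mq^\lor$, then $I\not\models F$, so $\FT(F,I)$ is $\bot$ and $\FT(F,I)^X_\bot$ is false under $I$; and $F^X_\bot$ is false as well, since its antecedent $\mp^\land$ holds while its consequent (whether $\mq^\lor$ or $(\mq^\lor)^X_\bot$) remains unsatisfied. If $I\models\mp^\land$ and $I\models\mq^\lor$, then $\FT(F,I)=\mp^\land\rar\FT(\mq^\lor,I)$: when $\mp\cap X\neq\emptyset$ both $F^X_\bot=F$ and $\FT(F,I)^X_\bot=\FT(F,I)$ are satisfied by $I$, using $I\models\FT(\mq^\lor,I)$ iff $I\models\mq^\lor$; when $\mp\cap X=\emptyset$ both transforms keep the antecedent $\mp^\land$ and replace the consequent by its $\cdot^X_\bot$, so the equivalence follows from the disjunction case applied to $\mq^\lor$.

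The main obstacle is the remaining case $I\not\models\mp^\land$, where the antecedent of the reduct, $\FT(\mp^\land,I)$, is not a conjunction of atoms but a conjunction containing $\bot$, so $\cdot^X_\bot$ does not apply to $\FT(F,I)$ in its literal form. The point to pin down is that $I\not\models\FT(\mp^\land,I)$ precisely because $I\not\models\mp^\land$, so the reduct implication, and any formula obtained from it by $\cdot^X_\bot$ (whether the implication is kept intact or its consequent is modified), is vacuously satisfied by $I$; and $F^X_\bot$ is likewise satisfied since its antecedent $\mp^\land$ is false under $I$. Making this step precise, namely fixing how $\cdot^X_\bot$ is to be read on a reduct whose antecedent contains $\bot$ and verifying that it cannot destroy vacuous satisfaction, is the part I expect to require the most care.
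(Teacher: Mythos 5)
Your proof is correct and takes essentially the same approach as the paper's: the same reduction to single extended literals and single simple implications, the same use of the disjunction case inside the implication case, and the same reliance on the fact that $I\models FT(F,I)$ iff $I\models F$. The subtlety you flag at the end---how $(\cdot)^X_\bot$ is to be read when the antecedent of the reduct contains $\bot$---is exactly the point the paper handles implicitly by case-splitting on whether $\mathcal{A}\cap X\cap I$ (rather than $\mathcal{A}\cap X$) is empty, since the atoms of the reduct's antecedent are those of $\mathcal{A}\cap I$; your resolution of it (the operation never touches the antecedent, so vacuous satisfaction is preserved) is sound.
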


\begin{proof}
To prove the assertion for a simple disjunction, it is sufficient to consider
the case when $G$ is a single extended literal.
If $G$ is an atom~$p$,
$$I\models p^X_\bot
\quad\hbox{iff}\quad
p \in I \hbox{ and }p \not \in X
\quad\hbox{iff}\quad
I\models FT(p,I)^X_\bot.$$
If $G$ is either $\neg p$ or $\neg \neg p$, then $G^X_\bot$ is $G$ and
$FT(G, I)^X_\bot$ is $FT(G, I)$. It is clear that $I$ satisfies
$G$ iff it satisfies $FT(G, I)$.

To prove the assertion of the lemma for simple formulas, it is sufficient
to consider the case when~$G$ is a single simple implication $\mp^\land
\rar \mq^\lor$.
If $I$ does not satisfy $G$ then it does not satisfy $G^X_\bot$ either;
on the other hand, in this case
$FT(G, I)$ is~$\bot$, and so is $FT(G, I)^X_\bot$.
Otherwise, $FT(G, I)^X_\bot$ is
\beq
\left ( FT(\mp^\land, I) \rar FT(\mq^\lor, I) \right )^X_\bot.
\eeq{eq:redxbot}
We consider two cases corresponding to whether or not $\mp \cap X
\cap I$ is
empty. If $\mp \cap X \cap I$ is empty, $I$ does not satisfy \eqref{eq:redxbot}
iff $$\mp \subseteq I \quad \hbox{ and } \quad I \not \models FT(\mq^\lor,
I)^X_\bot,$$ or equivalently,
$$I \models \mp^\land \quad \hbox{ and } \quad I \not \models (\mq^\lor)^X_\bot.$$
If on the other hand, $\mp \cap X \cap I$ is non-empty, then \eqref{eq:redxbot} is
$FT(G, I)$ and $G^X_\bot$ is $G$.
\end{proof}

\medskip

\begin{lemma}\label{lem:redtoredbot}
For any simple disjunction  $G$ and any interpretations $I$ and $J$,
\begin{center}
$J\models FT(G, I)$ iff $I\models FT(G, I)^{I \setminus J}_\bot$.
\end{center}
\end{lemma}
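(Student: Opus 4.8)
The plan is to exploit the fact that both sides of the biconditional are \emph{existential} conditions over the disjunctive terms of $G$, so that it suffices to understand the $FT$-reduct of a single extended literal. Since $G$ is a simple disjunction, write it as $\bigvee_\iota L_\iota$ with each $L_\iota$ an extended literal. The reduct distributes over disjunction, so $FT(G,I)=\bigvee_\iota FT(L_\iota, I)$, and the operation $\cdot^{I\setminus J}_\bot$ acts on this disjunction term by term, deleting exactly those disjuncts that are atoms belonging to $I\setminus J$. Consequently $J\models FT(G,I)$ holds iff $J\models FT(L_\iota,I)$ for some $\iota$, and $I\models FT(G,I)^{I\setminus J}_\bot$ holds iff, for some $\iota$, the disjunct $FT(L_\iota,I)$ survives the deletion and is satisfied by $I$. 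Thus the lemma reduces to the per-literal claim that $J\models FT(L,I)$ iff $I\models FT(L,I)^{I\setminus J}_\bot$, and the general statement follows by taking the disjunction over $\iota$.

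First I would compute $FT(L,I)$ for each of the three shapes of extended literal. Unwinding the definition of the $FT$-reduct gives, for an atom $p$, that $FT(p,I)$ is $p$ when $p\in I$ and $\bot$ otherwise; for $\neg p$, that $FT(\neg p,I)$ is $\bot$ when $p\in I$ and the formula $\bot\rar\bot$ (equivalent to $\top$) otherwise; and for $\neg\neg p$, that $FT(\neg\neg p,I)$ is $\bot\rar\bot$ when $p\in I$ and $\bot$ otherwise. The essential structural observation is therefore that every disjunct of $FT(G,I)$ is one of three kinds: an atom that belongs to $I$, the constant $\bot$, or a formula equivalent to $\top$; in particular every \emph{atomic} disjunct lies in $I$. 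This is what makes the specific choice $X=I\setminus J$ produce the correct cancellation: deleting the atoms of $X$ removes precisely the atomic disjuncts $p\in I$ with $p\notin J$, leaving the atomic disjuncts $p\in I\cap J$.

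With this in hand the per-literal equivalence is a short case check. A $\bot$ disjunct is satisfied by neither $J$ nor $I$ and survives the deletion, contributing nothing on either side. A disjunct equivalent to $\top$ is satisfied by both and always survives, contributing a witness on both sides simultaneously. An atomic disjunct $p\in I$ is satisfied by $J$ iff $p\in J$; and it survives $\cdot^{I\setminus J}_\bot$ iff $p\notin I\setminus J$, that is (since $p\in I$) iff $p\in J$, in which case $I\models p$ automatically. Hence in every case the contribution of $L$ to $J\models FT(G,I)$ matches its contribution to $I\models FT(G,I)^{I\setminus J}_\bot$. On the surviving non-atomic disjuncts one may invoke the already-noted fact that $I\models FT(F,I)$ iff $I\models F$, though here it is just as quick to read off satisfaction directly.

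The step I expect to require the most care is not the arithmetic of the cases but the bookkeeping around the operation $\cdot^{I\setminus J}_\bot$ applied to $FT(G,I)$, which is a disjunction of atoms, $\bot$'s, and formulas equivalent to $\top$ rather than literally a simple disjunction of extended literals: I must be explicit that the deletion touches only the atomic disjuncts, and that all of these lie in $I$, since it is exactly this confinement that turns $X=I\setminus J$ into the clean ``keep the $I\cap J$ atoms, drop the rest'' behavior that aligns the two sides. Alternatively, one could shorten the argument by appealing to Lemma~\ref{lem:bottoredbot} to replace $I\models FT(G,I)^{I\setminus J}_\bot$ by $I\models G^{I\setminus J}_\bot$ and then comparing with $J\models FT(G,I)$ directly; but the self-contained case analysis above seems the most transparent.
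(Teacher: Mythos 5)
Your proposal is correct and follows essentially the same route as the paper: reduce to the case of a single extended literal (justified by the fact that both the reduct and the $\cdot^{X}_\bot$ operation act disjunct by disjunct), then check the three cases $p$, $\neg p$, $\neg\neg p$, where $FT(p,I)$ is $p$ or $\bot$, $FT(\neg p,I)$ is $\top$-equivalent or $\bot$ according to $p\notin I$, and $FT(\neg\neg p,I)$ is $\top$-equivalent or $\bot$ according to $p\in I$. Your extra bookkeeping about which disjuncts are atomic (and that all atomic disjuncts of $FT(G,I)$ lie in $I$) is exactly the observation that makes the paper's terse case analysis go through.
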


\begin{proof}
It is sufficient to prove the lemma for the case when $G$ is a single extended literal.
If $G$ is an atom $p$ then
$$J\models FT(p, I)
\quad\hbox{iff}\quad
p \in I\hbox{ and }p \in J
\quad\hbox{iff}\quad
I\models FT(p,I)^{I\setminus J}_\bot.$$
If $G$ is $\neg p$ then both
$$J\models FT(G, I)$$ and
$$I\models FT(G, I)^{I \setminus J}_\bot$$
are equivalent to $p \not\in I$.
If $G$ is $\neg \neg p$ then both $$J \models FT(G, I)$$
and
$$I\models FT(G, I)^{I \setminus J}_\bot$$ are equivalent to
$p \in I$.
\end{proof}

\begin{lemma}\label{lem:redtoredbotimp}
For any simple formula $G$ and any interpretations $I$ and $J$,
if $I\models FT(G, I)^{I \setminus J}_\bot$ then \hbox{$J\models FT(G, I)$}.
\end{lemma}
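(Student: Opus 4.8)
The plan is to reduce to a single simple implication and then run the same case analysis used for Lemma~\ref{lem:redtoredbot}, but only in the direction from $I$ to $J$. Since $FT(\cdot,I)$, the operation $(\cdot)^{I\setminus J}_\bot$, and the satisfaction relation all distribute over the conjunctive terms of a simple formula, it suffices to treat the case in which $G$ is a single simple implication $\mp^\land \rar \mq^\lor$. Write $X$ for $I\setminus J$, and note that $X\subseteq I$.

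If $I\not\models G$, then $FT(G,I)$ is $\bot$, so $FT(G,I)^X_\bot$ is $\bot$ as well, and the hypothesis $I\models FT(G,I)^X_\bot$ cannot hold; the implication to be proved is then vacuously true. So assume $I\models G$, so that $FT(G,I)$ is $FT(\mp^\land,I)\rar FT(\mq^\lor,I)$. I would split on whether $\mp\cap X$ is empty, matching the two cases in the definition of $(\cdot)^X_\bot$ for simple implications. If $\mp\cap X$ is non-empty, pick $p\in\mp\cap X$; since $p\in I$ we have that $FT(p,I)$ is $p$, so $p$ is a conjunctive term of $FT(\mp^\land,I)$, and since $p\notin J$ it follows that $J\not\models FT(\mp^\land,I)$. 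Hence $J$ satisfies $FT(G,I)$ vacuously, and the hypothesis is not even needed.

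The remaining case, $\mp\cap X=\emptyset$, is where the hypothesis does the work. Here $FT(G,I)^X_\bot$ is $FT(\mp^\land,I)\rar (FT(\mq^\lor,I))^X_\bot$. To show $J\models FT(\mp^\land,I)\rar FT(\mq^\lor,I)$, assume $J\models FT(\mp^\land,I)$ and aim for $J\models FT(\mq^\lor,I)$. If $\mp\not\subseteq I$, then $FT(\mp^\land,I)$ contains the conjunctive term $\bot$ and is unsatisfiable, contradicting this assumption; so $\mp\subseteq I$ and $FT(\mp^\land,I)$ is $\mp^\land$. Then $I\models FT(\mp^\land,I)$, and combined with the hypothesis $I\models FT(\mp^\land,I)\rar (FT(\mq^\lor,I))^X_\bot$ this gives $I\models (FT(\mq^\lor,I))^X_\bot$. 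Applying Lemma~\ref{lem:redtoredbot} to the simple disjunction $\mq^\lor$, with the same $I$ and $J$ (so that its $I\setminus J$ is exactly $X$), in the direction from the $I$-side condition to the $J$-side condition, yields $J\models FT(\mq^\lor,I)$, as required.

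I expect the main obstacle to be the bookkeeping around $(\cdot)^X_\bot$ as applied to the reduct $FT(G,I)$ rather than to $G$ directly: one must remember that the ``antecedent atom set'' governing the two cases of the definition is $\mp\cap I$, which coincides with $\mp\cap X$ because $X\subseteq I$, and one must handle the possibility that $\mp\not\subseteq I$, in which case $FT(\mp^\land,I)$ is really a conjunction containing $\bot$. Once these points are settled, the only substantive input is Lemma~\ref{lem:redtoredbot}, invoked for the consequent disjunction; the asymmetry of the present lemma, an implication rather than a biconditional, is exactly what lets the non-empty-intersection case go through without using the hypothesis.
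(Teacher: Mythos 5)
Your proof is correct and follows essentially the same route as the paper's: reduce to a single simple implication $\mp^\land \rar \mq^\lor$, dispose of the case $I\not\models G$, split on whether $\mp\cap(I\setminus J)$ is empty, and in the empty case invoke Lemma~\ref{lem:redtoredbot} on the consequent disjunction. The only differences are cosmetic — you argue the second case directly while the paper argues it by contraposition, and you spell out the bookkeeping (e.g., that $\mp\cap X\cap I=\mp\cap X$, and that $\mp\not\subseteq I$ makes $FT(\mp^\land,I)$ unsatisfiable) that the paper leaves implicit.
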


\begin{proof}
It is sufficient
to consider the case when~$G$ is a single simple implication $\mp^\land
\rar \mq^\lor$.
If $I$ does not satisfy~$G$ then both $FT(G, I)$ and $FT(G, I)^{I \setminus
J}_\bot$ are $\bot$.
Assume $I$ satisfies $G$. Then $FT(G, I)$ is
$$
FT(\mp^\land, I) \rar FT(\mq^\lor, I).
$$
We consider two cases corresponding to whether or not
$\mp \cap I\setminus J$ is
empty. If $\mp \cap I \setminus J$ is non-empty,
then $FT(G, I)^{I \setminus J}_\bot$ is $FT(G, I)$. Furthermore, $J$ does not
satisfy $FT(\mp^\land, I)$. Indeed, if it did, $\mp$ would be a subset of both
$I$ and $J$, contradicting the assumption that $\mp \cap I \setminus J$ is
non-empty. It follows that $J \models FT(G, I)$.
If, on the other hand, $\mp \cap I \setminus J$ is empty,
then $FT(G, I)^{I \setminus J}_\bot$ is
$$
FT(\mp^\land, I) \rar \left (FT(\mq^\lor, I) \right )^{I \setminus J}_\bot.
$$
Assume that $J$ does not satisfy $FT(G, I)$. Then
$$J \models FT(\mp^\land, I) \quad \hbox{ and } \quad J \not \models FT(\mq^\lor, I).
$$
From the first condition we may conclude that $I \models FT(\mp^\land, I)$.
(Indeed, if $J \models FT(\mp^\land, I)$ then~$\mp$ must be a subset both of $I$
and of $J$.)
From the last condition using Lemma~\ref{lem:redtoredbot} it follows that
$I \not \models FT(\mq^\lor, I)^{I \setminus J}_\bot.$ We may conclude that $I
\not \models FT(G, I)^{I\setminus J}_\bot$.
\end{proof}

\medskip\noindent{\em Proof of Part (a) of Main Lemma}

\noindent
Let $I$ be an FLP-stable model of an FT-tight simple program~$\mh$.
Then $I \models \mh$, so that $I \models FT(\mh, I)$. We need to
show that no proper subset~$J$ of~$I$ satisfies $FT(\mh, I)$. Take a
proper subset $J$ of $I$, and let $X$ be
$I\setminus J$. By Lemma~\ref{lem:dtosft}, $I$ does not satisfy $\mh^X_\bot$. Then
there is a simple rule $G \rar H$ in $\mh$ such that $I$ satisfies
$G^X_\bot$ and does not satisfy $H^X_\bot$.
By Lemma~\ref{lem:bottoredbot}, it follows that~$I$ satisfies
$FT(G, I)^X_\bot$ and does not satisfy $FT(H, I)^X_\bot$.
Since $X = I \setminus J$, it follows that $J$ satisfies $FT(G, I)$
(Lemma~\ref{lem:redtoredbotimp}) but does not satisfy $FT(H, I)$
(Lemma~\ref{lem:redtoredbot}).  So $J$ does not satisfy
$FT(\mh, I)$. It follows that $I$ is FT-stable.
\medskip

We turn now to the proof of part~(b) of Main Lemma.
If $F$ is a simple disjunction then by $F^+$ we denote
the result of replacing each extended literal $\neg \neg p$ in $F$ by $p$,
and similarly for simple implications, formulas, rules, and programs.

\begin{lemma}\label{lem:lnsflp}
Let $I$ be a model of a simple program $\mh$, and let $K$ be a set of atoms
such that
there are no FLP-critical edges in the subgraph of the dependency graph of $\mh$ induced by $K$.
If $I \models (\mh^+)^K_\bot$ then
$I \setminus K$ satisfies the FT-reduct of $\mh$ with respect to $I$.
\end{lemma}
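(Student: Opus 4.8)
The plan is to prove this as the FLP-side dual of Lemma~\ref{lem:lnsft}, following the same skeleton but with the roles of $FT$ and FLP, and of FT-critical and FLP-critical edges, interchanged. Fix an arbitrary simple rule $G\rar H$ of $\mh$ and show that $I\setminus K$ satisfies its FT-reduct. Since $I$ is a model of $\mh$, we have $I\models G\rar H$, so this reduct is $FT(G,I)\rar FT(H,I)$, and it suffices to assume $I\setminus K\models FT(G,I)$ and derive $I\setminus K\models FT(H,I)$. The first fact I would record is that $I\setminus K\models FT(G,I)$ forces $I\models G$: if some conjunctive term $D$ of $G$ were not satisfied by $I$, then $FT(D,I)$ would be $\bot$, so the conjunction $FT(G,I)$ could be satisfied by no interpretation. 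Consequently $I\models H$, and $H$ has a disjunctive term in $I$; the whole difficulty is to place such a term outside $K$.

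Next I would feed in the hypothesis $I\models(\mh^+)^K_\bot$. Because $H$ contains only atoms, $H^+=H$, so the relevant conjunct is $(G^+)^K_\bot\rar H^K_\bot$. If $I\models H^K_\bot$ then $H$ has a surviving disjunctive term in $I\setminus K$ and we are done. Otherwise $I\not\models(G^+)^K_\bot$, so some conjunctive term $D=\mp^\land\rar\mq^\lor$ of $G$ fails $(D^+)^K_\bot$. Here I would use that $\neg\neg q$ and $q$ have the same truth value under any fixed interpretation, so $I\models D$ gives $I\models D^+$; this rules out the case $\mp\cap K\neq\emptyset$ (in which $(D^+)^K_\bot$ is just $D^+$) and leaves $\mp\cap K=\emptyset$. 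Unwinding $I\not\models(\mp^\land\rar((\mq^\lor)^+)^K_\bot)$ then yields $\mp\subseteq I\setminus K$, $I\models(\mq^\lor)^+$, and the key structural fact that every disjunctive term of $(\mq^\lor)^+$ satisfied by $I$ is an atom lying in $K$; equivalently, every surviving disjunct of $(\mq^\lor)^+$ is false under $I$.

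The crux---and the step I expect to be the main obstacle---is the same one driving Case~2.2 of Lemma~\ref{lem:lnsft}, now transposed to double negations. Since $\mp\subseteq I\setminus K$, the assumption $I\setminus K\models FT(G,I)$ yields $I\setminus K\models FT(\mq^\lor,I)$, so some extended literal $e$ of $\mq$ satisfies $I\setminus K\models FT(e,I)$. Evaluating the reduct of the three forms $p$, $\neg p$, $\neg\neg p$ and comparing with ``every surviving disjunct of $(\mq^\lor)^+$ is false under $I$'', I would eliminate the atomic and singly-negated cases as contradictory and conclude that $e$ must be $\neg\neg p$ with $p\in I\cap K$. Thus $\neg\neg p$ occurs in a conjunctive term of $G$ with $p\in K$.

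Finally I would close the loop through FLP-critical edges. Suppose, for contradiction, that $H$ has no disjunctive term in $I\setminus K$; since $H$ does have a term in $I$, it has a term $p'\in I\cap K$, in particular a disjunctive term $p'\in K$. Together with $\neg\neg p\in\mq$ and $p\in K$, this is exactly an FLP-critical edge from $p'$ to $p$, and both endpoints lie in $K$, so the edge sits inside the subgraph of the dependency graph of $\mh$ induced by $K$---contradicting the hypothesis that this subgraph contains no FLP-critical edges. Hence $H$ has a disjunctive term in $I\setminus K$, which is precisely $I\setminus K\models FT(H,I)$, completing the argument. The routine parts are the reduct computations for the extended literals and the $(\cdot)^+$/$(\cdot)^K_\bot$ bookkeeping; the only genuinely delicate point is isolating the $\neg\neg p$-with-$p\in K$ configuration and matching it to the FLP-critical edge condition.
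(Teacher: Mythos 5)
Your proof is correct and takes essentially the same approach as the paper's: the same case split on $I \models H^K_\bot$ versus $I \not\models (G^+)^K_\bot$, the same extraction of a conjunctive term $\mp^\land \rar \mq^\lor$ with $\mp\cap K=\emptyset$ whose only $I$-satisfied disjuncts (after applying $^+$) are atoms in $K$, and the same FLP-critical-edge contradiction to handle the $\neg\neg p$, $p\in K$ configuration. The only difference is organizational: you assume $I\setminus K\models FT(G,I)$ upfront and recover the paper's Case~2.1 in contrapositive form (eliminating atomic and singly-negated witnesses), which is a cosmetic restructuring rather than a different argument.
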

\begin{proof}
We need to show that $I \setminus K$ satisfies the FT-reduct of every
simple rule $G \rar H$ in~$\mh$.  Since $I$ is a model
of~$\mh$, that reduct is $FT(G, I) \rar FT(H, I)$.  If $I \not \models G$ then
the antecedent of this implication is equivalent to $\bot$, and the
assertion that the implication is satisified by $I \setminus K$ is trivial.

Assume then that  $I \models G$.  Since $I$ is a model of $\mh$, it
follows that $I\models H$. Since $I \models (\mh^+)^K_\bot$, either
$I \not \models (G^+)^K_\bot$ or $I \models H^K_\bot$.
\medskip

\noindent {\it Case 1: $I \models H^K_\bot$.}  Then~$H$ has a disjunctive
term~$p$ that belongs to~$I$ but not to~$K$. Then $p$
is also a disjunctive term in $FT(H, I)$, so that $I \setminus K
\models FT(H, I)$. We conclude that $I \setminus K$ satisfies
$FT(G \rar H, I)$.

\medskip
\noindent {\it Case 2: $I \not \models (G^+)^K_\bot$.} Consider a
conjunctive term~$$\mp^\land \rar \mq^\lor$$ in $G$ such that $I$ does not satisfy
$(\mp^\land \rar (\mq^\lor)^+)^K_\bot$.
Since $I \models G$, $I \models \mp^\land \rar \mq^\lor$. It follows that $\mp
\cap K$ is empty and that $I$ satisfies $\mp^\land$, $\mq^\lor$
and $(\mq^\lor)^+$ but does not satisfy $((\mq^\lor)^+)^K_\bot$.

\medskip

\noindent {\it Case 2.1: $\mq^\lor$  does not contain any extended
literal $\neg \neg p$ such that $p \in K$.}
Since $I$ satisfies $(\mq^\lor)^+$ but not $((\mq^\lor)^+)^K_\bot$, each atomic
disjunctive term
$p$ in~$(\mq^\lor)^+$ that is in $I$ must also be in $K$.
Furthermore, $I$ cannot satisfy any literal $\neg p$ in $\mq$. (If it did, then
that literal would also be in $((\mq^\lor)^+)^K_\bot$, and this
disjunction would be satisfied by~$I$.)
 Since $\mq$ does not contain any extended literal $\neg\neg p$
 such that $p$ is in $K$, $I$ does not satisfy any extended literal
$\neg \neg p$
 in $\mq$. (For each extended literal $\neg \neg p$ in $\mq$, $p$ is a
 disjunctive term in $((\mq^\lor)^+)^K_\bot$. If~$I$ satisfied some
extended
 literal $\neg \neg p \in \mq$, then $I$ would satisfy $p$ and therefore
also
 satisfy $((\mq^\lor)^+)^K_\bot$.)  We conclude that every extended literal
in~$\mq$ that is satisfied by~$I$ is an atom from~$K$.  It follows that
$FT(\mq^\lor, I)$ is  equivalent to a disjunction of atoms
from~$K$.
So $I \setminus
K \not \models FT(\mq^\lor, I)$. Since $I \models \mp^\land$, $I \models
FT(\mp^\land, I)$.  Since $\mp \cap K$ is empty,
$I \setminus K$ also satisfies $FT(\mp^\land)$. We may
conclude that $I \setminus K \not \models FT(G, I)$ so that $I \setminus K
\models FT(G \rar H, I)$.
\medskip

\noindent {\it Case 2.2: $\mq^\lor$ contains an extended
literal $\neg \neg p$ such that $p \in K$.}
Then no disjunctive term in $H$ occurs in $K$. (If there were such a
disjunctive term $q$ in $H$ then there
would be an FLP-critical edge from $q$ to $p$ in the subgraph of the dependency
graph of $\mh$ induced by $K$.
But the condition
of the lemma stipulates that there are no FLP-critical edges in
that graph.)
Since $I$ satisfies $H$, $I$ satisfies $FT(H, I)$ as well. Since no atoms
from $K$ occur in $H$, it
follows that $I \setminus K$ satisfies $FT(H, I)$, so that $I \setminus K$
satisfies $FT(G \rar H, I)$.
\end{proof}

\begin{lemma}\label{lem:findK1}
If $\mh$ is an FLP-tight simple program and $X$ is a non-empty set of atoms, then there
exists a non-empty subset $K$ of $X$ such that
in the
subgraph of the dependency graph of $\mh$ induced by~$X$
\begin{enumerate}
\item[(i)] there are no edges from $K$ to atoms in $X \setminus K$, and
\item[(ii)]  no atom in $K$ has outgoing FLP-critical edges.
\end{enumerate}
\end{lemma}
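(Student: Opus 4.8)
The plan is to repeat the argument used for Lemma~\ref{lem:findK} almost verbatim, with ``FT-critical'' replaced throughout by ``FLP-critical.'' The only property of FT-critical edges that the earlier proof actually exploited was FT-tightness, and the hypothesis here supplies the analogous FLP-tightness; so no new idea beyond substituting the FLP notion of criticality should be needed.

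First I would pass to the subgraph of the dependency graph of~$\mh$ induced by~$X$ and observe that FLP-tightness is inherited by it: every path in this induced subgraph is also a path in the full dependency graph, since all of its edges are edges of the full graph. Hence, because the full graph has no path containing infinitely many FLP-critical edges, neither does the induced subgraph.

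The heart of the argument is to locate a vertex~$b$ of the induced subgraph from which no path contains an FLP-critical edge. I would establish this by contradiction: if every vertex had some outgoing path carrying an FLP-critical edge, then, starting from any vertex of~$X$ (nonempty by assumption) and repeatedly concatenating such paths, one would build an infinite path with infinitely many FLP-critical edges, contradicting FLP-tightness of the induced subgraph. Taking~$K$ to be the set of all vertices reachable from~$b$ (so that $b \in K$ and $K$ is nonempty), condition~(i) is then immediate, since an edge from a vertex reachable from~$b$ to a vertex outside~$K$ would make its endpoint reachable from~$b$ as well, a contradiction. Condition~(ii) follows because an outgoing FLP-critical edge at some $p \in K$, appended to a path from~$b$ to~$p$, would yield a path from~$b$ containing an FLP-critical edge, contrary to the choice of~$b$.

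I expect the only genuinely delicate point to be the justification that such a vertex~$b$ exists---that is, turning the failure of this property into an explicit infinite path witnessing non-tightness---since everything else reduces to a routine reachability computation. As this step exactly mirrors the corresponding one in the proof of Lemma~\ref{lem:findK}, the whole argument should be a direct transcription.
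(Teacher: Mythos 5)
Your proposal is correct and matches the paper exactly: the paper's entire proof of Lemma~\ref{lem:findK1} is the remark that it is ``similar to the proof of Lemma~\ref{lem:findK},'' and your transcription (pass to the induced subgraph, find a vertex~$b$ with no FLP-critical edge on any outgoing path via the infinite-concatenation contradiction, take~$K$ to be the vertices reachable from~$b$) is precisely that argument with FT-criticality replaced by FLP-criticality. Your explicit justification of the existence of~$b$ is, if anything, slightly more detailed than the parenthetical remark in the paper's proof of Lemma~\ref{lem:findK}.
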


\noindent
The proof is similar to the proof of Lemma~\ref{lem:findK}.

\begin{lemma}\label{lem:dtosflp}
If $\mh$ is an FLP-tight simple program and $I$ is an FT-stable model of $\mh$, then
for every non-empty subset $X$ of~$I$, $I \not \models (\mh^+)^X_\bot$.
\end{lemma}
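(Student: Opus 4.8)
The plan is to mirror the argument for Lemma~\ref{lem:dtosft}, interchanging the roles of FT and FLP throughout and substituting each ingredient by its FLP-counterpart. I would argue by contradiction: suppose some non-empty $X \subseteq I$ satisfies $I \models (\mh^+)^X_\bot$. Applying Lemma~\ref{lem:findK1} to the FLP-tight program $\mh$ and the set $X$ yields a non-empty subset $K \subseteq X$ such that, in the subgraph of the dependency graph of $\mh$ induced by $X$, there are no edges leaving $K$ for $X\setminus K$ and no atom of $K$ has an outgoing FLP-critical edge.

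The one genuine twist, compared with Lemma~\ref{lem:dtosft}, is that the hypothesis now concerns $(\mh^+)^X_\bot$ rather than $\mh^X_\bot$, so Lemma~\ref{lem:subbot} cannot be invoked for $\mh$ directly; instead I would apply it to the simple program $\mh^+$. To license this I need two observations. First, $\mh^+$ is classically equivalent to $\mh$, since replacing every extended literal $\neg\neg p$ by $p$ preserves models; as $I$ is FT-stable it is a model of $\mh$, and hence of $\mh^+$. Second, the dependency graphs of $\mh$ and $\mh^+$ coincide: applying the $+$ operation turns each positive occurrence of an atom $q$ (arising from $q$ or $\neg\neg q$ in some consequent $\mq^\lor$) into a strictly positive one and creates no new positive occurrences, so the positive-occurrence relation, and therefore the edge relation, is unchanged. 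Consequently condition~(i) supplied by Lemma~\ref{lem:findK1} also holds in the dependency graph of $\mh^+$, and Lemma~\ref{lem:subbot} applied to $\mh^+$ gives $I \models (\mh^+)^K_\bot$.

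It remains to reach the contradiction. By condition~(ii) and $K \subseteq X$, the subgraph of the dependency graph of $\mh$ induced by $K$ contains no FLP-critical edge, so Lemma~\ref{lem:lnsflp} applies to $K$ and yields that $I \setminus K$ satisfies the FT-reduct of $\mh$ with respect to $I$. Since $K$ is non-empty and contained in $I$, the set $I \setminus K$ is a proper subset of $I$; as $I$ itself satisfies its own FT-reduct, this contradicts the FT-stability (minimality) of $I$, and the lemma follows.

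I expect the only real care to lie in the middle paragraph, namely justifying that Lemma~\ref{lem:subbot}, stated for an arbitrary simple program, may legitimately be applied to $\mh^+$. The classical equivalence of $\mh$ and $\mh^+$ and the equality of their dependency graphs are both elementary, but they are precisely the bookkeeping needed to transport condition~(i) of Lemma~\ref{lem:findK1} from $\mh$ to $\mh^+$; once these are in place, everything else is a direct transcription of the proof of Lemma~\ref{lem:dtosft} with the FT and FLP notions swapped.
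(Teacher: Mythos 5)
Your proposal is correct and takes essentially the same route as the paper's own proof: assume $I \models (\mh^+)^X_\bot$, extract $K$ via Lemma~\ref{lem:findK1}, conclude $I \models (\mh^+)^K_\bot$ via Lemma~\ref{lem:subbot}, and contradict FT-stability via Lemma~\ref{lem:lnsflp}. Your middle paragraph---checking that $\mh^+$ is classically equivalent to $\mh$ and that the two programs have the same dependency graph, so that Lemma~\ref{lem:subbot} may legitimately be applied to $\mh^+$---makes explicit a step the paper performs silently, which is added rigor rather than a divergence.
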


\begin{proof}
Assume on the contrary that $I \models (\mh^+)^X_\bot$
for some non-empty subset $X$ of $I$.  Consider a non-empty subset~$K$
of~$X$ meeting conditions (i) and (ii) from Lemma~\ref{lem:findK1}.
Since $I \models (\mh^+)^X_\bot$ and $K$ satisfies (i), by
Lemma~\ref{lem:subbot} we may conclude that $I \models (\mh^+)^K_\bot$.
Since $K$ satisfies~(ii) and is a subset of $X$, there are no FLP-critical
edges in the subgraph of the dependency graph of $\mh$ induced by $K$. So by Lemma~\ref{lem:lnsflp}, $I \setminus K$
satisfies the FT-reduct of $\mh$,
contradicting the assumption that $I$ is FT-stable.
\end{proof}

\begin{lemma}\label{lem:redtoredbotflp}
Let $G$ be a simple disjunction or a simple formula.  For any interpretations
$I$ and $J$ such that $J \subseteq I$, if $I\models
(G^+)^{I \setminus J}_\bot$ then $J\models G$.
\end{lemma}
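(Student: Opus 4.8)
The plan is to follow the same two-level strategy used for Lemmas~\ref{lem:redtoredbot} and~\ref{lem:redtoredbotimp}: first establish the claim when $G$ is a simple disjunction, reducing to the case of a single extended literal, and then use that result to handle the case when $G$ is a simple formula, reducing to a single simple implication. Throughout, I would write $X$ for $I\setminus J$ and keep in mind that the hypothesis $J\subseteq I$ means $X$ consists exactly of the atoms of $I$ that are missing from $J$.

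For the disjunction case, I would note that both $(\,\cdot\,)^+$ and $(\,\cdot\,)^X_\bot$ distribute over disjunction, so it is enough to prove that $I\models(\ell^+)^X_\bot$ implies $J\models\ell$ for a single extended literal $\ell$. There are three sub-cases. If $\ell$ is an atom $p$, then $(\ell^+)^X_\bot$ is $p$ when $p\notin X$ and $\bot$ otherwise; since $I$ cannot satisfy $\bot$, the hypothesis forces $p\in I$ and $p\notin X$, hence $p\in J$, so $J\models p$. If $\ell$ is $\neg\neg p$, then $\ell^+$ is $p$ and the same computation gives $p\in J$, whence $J\models\neg\neg p$. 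If $\ell$ is $\neg p$, then $\ell^+$ is $\neg p$, which is unaffected by $(\,\cdot\,)^X_\bot$ because only atomic disjuncts are removed; so $I\models\neg p$, i.e.\ $p\notin I$, and since $J\subseteq I$ we get $p\notin J$ and $J\models\neg p$.

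For the simple-formula case it suffices to treat a single simple implication $G=\mp^\land\rar\mq^\lor$, since the conjunctive structure is preserved by both transformations. The decisive point is that $\mp$ is a conjunction of atoms, so $G^+$ differs from $G$ only in its consequent, namely $G^+=\mp^\land\rar(\mq^\lor)^+$. I would then split on the definition of $(\,\cdot\,)^X_\bot$ for implications. If $\mp\cap X$ is non-empty, then some atom of $\mp$ lies in $X=I\setminus J$ and hence outside $J$, so $J\not\models\mp^\land$ and $J\models G$ holds vacuously. If $\mp\cap X$ is empty, then $(G^+)^X_\bot$ is $\mp^\land\rar((\mq^\lor)^+)^X_\bot$; assuming $J\models\mp^\land$, the inclusions $\mp\subseteq J\subseteq I$ give $I\models\mp^\land$, so the hypothesis yields $I\models((\mq^\lor)^+)^X_\bot$, and the already-established disjunction case applied to $\mq^\lor$ delivers $J\models\mq^\lor$; thus $J\models G$.

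I expect the only real care to be in tracking the two-case definition of $(\,\cdot\,)^X_\bot$ on implications and in recognizing that $(\,\cdot\,)^+$ leaves the atomic antecedent untouched---this is precisely what lets the implication case collapse onto the disjunction case. No genuinely hard step arises: the lemma is a bookkeeping counterpart, for the operation $(\,\cdot\,)^+$, of Lemmas~\ref{lem:redtoredbot} and~\ref{lem:redtoredbotimp}.
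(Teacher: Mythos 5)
Your proof is correct and follows essentially the same route as the paper's: reduce the disjunction case to a single extended literal, then reduce the simple-formula case to a single implication and invoke the disjunction case on the consequent. The only cosmetic difference is that you argue the implication case directly while the paper argues by contraposition.
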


\begin{proof}
To prove the assertion of the lemma for simple disjunctions, it is
sufficient to consider the case when $G$ is a single
extended literal. If $G$ is $p$ or $\neg\neg p$ then
$(G^+)^{I \setminus J}_\bot$ is $p^{I \setminus J}_\bot$.  Since~$I$
satisfies this formula, $p \in J$, so that $J \models G$.
If~$G$ is~$\neg p$ then $(G^+)^{I \setminus J}_\bot$ is~$\neg p$.
Since $I\models \neg p$ and $J \subseteq I$, $J \models \neg p$.

To prove the assertion of the lemma for simple formulas, it is sufficient
to consider the case when~$G$ is a single simple implication $\mp^\land
\rar \mq^\lor$.  If $\mp \cap I \setminus J$ is non-empty
then $$J \not \models \mp^\land,$$ so that $J \models G$.
If, on the other hand, $\mp \cap I \setminus J$ is empty
then $(G^+)^{I \setminus J}_\bot$ is
$\mp^\land \rar ((\mq^\lor)^+)^{I \setminus J}_\bot$.
Assume that $J$ does not satisfy $G$. Then
$$J \models \mp^\land \quad \hbox{ and } \quad J \not \models \mq^\lor.$$
From the first condition and the fact that $J \subseteq I$ we may conclude
that $I \models \mp^\land$.  From the second condition it follows, by
the part of the lemma proved above, that
$I \not \models ((\mq^\lor)^+)^{I \setminus J}_\bot.$  Consequently
$I \not \models (G^+)^{I\setminus J}_\bot$.
\end{proof}

\medskip\noindent{\em Proof of Part (b) of Main Lemma}

\noindent
Let $I$ be an FT-stable model of an FLP-tight simple program~$\mh$. Then
$I$ is a model of $\mh$, and consequently a model of the reduct
$\FLP(\mh,I)$.
We need to show that no proper subset $J$ of $I$ is a model of this reduct.
Consider a proper subset~$J$ of~$I$, and let $X$ be
$I\setminus J$. By
Lemma~\ref{lem:dtosflp}, $I$ does not satisfy $(\mh^+)^X_\bot$. Then
there is a simple rule $G \rar H$ in $\mh$ such that $I$ satisfies
$(G^+)^X_\bot$ and does not satisfy $H^X_\bot$.
Since $(G^+)^X_\bot$ entails $G^+$, and $G^+$ is  equivalent
to~$G$, we can conclude that~$I$ satisfies~$G$, so that $G \rar H$ belongs to
the reduct $\FLP(\mh, I)$.  On the other hand,
by Lemma~\ref{lem:redtoredbotflp},~$J$ satisfies $G$.
Since $I$ does not satisfy $H^{X}_\bot$ and $H$ is a disjunction of atoms, $J$ does not satisfy $H$.
So $J$ does not satisfy $G \rar H$, and consequently is not a model
$\FLP(\mh, I)$.

\bibliographystyle{acmtrans}
\end{document}